\newtheorem{theorem}{Theorem}
\newtheorem{lemma}{Lemma}
\begin{document}

\title{Efficient Hybrid Beamforming with Anti-Blockage Design for High-Speed Railway Communications}
\author{Meilin~Gao,~\IEEEmembership{Student Member,~IEEE},
        Bo~Ai,~\IEEEmembership{Senior Member,~IEEE},
        Yong~Niu,~\IEEEmembership{Member,~IEEE},\\
        Wen Wu,~\IEEEmembership{Member,~IEEE},
         Peng~Yang,~\IEEEmembership{Member,~IEEE},
         Feng~Lyu,~\IEEEmembership{Member,~IEEE},\\
         ~and~Xuemin~(Sherman)~Shen,~\IEEEmembership{Fellow,~IEEE}


\thanks{M. Gao, B. Ai, and Y. Niu are with the State Key Laboratory of Rail Traffic Control and Safety, Beijing Engineering Research Center of High-speed Railway Broadband Mobile Communications, and the School of Electronic and Information Engineering, Beijing Jiaotong University, Beijing 100044, P.R.China (e-mails:
\{meilingao, boai, niuyong\}@bjtu.edu.cn). (\emph{Corresponding author: Bo Ai.}) }

\thanks{W. Wu and X. Shen are with the University of Waterloo, Waterloo, ON N2L 3G1, Canada
(e-mails: \{w77wu, sshen\}@uwaterloo.ca).}

\thanks{P. Yang is with the School of Electronic Information and Communications, Huazhong University of Science and Technology, Wuhan, 430074, P.R.China (e-mail: yangpeng@hust.edu.cn).}

\thanks{F. Lyu is with the School of Computer Science and Engineering, Central South University, Changsha, 410083, P.R.China. (e-mail: fenglyu@csu.edu.cn).
}

\thanks{Part of this work has been presented at IEEE International Conference on Communications (ICC) 2019 \cite{Meilin2019ICC}.}}
\maketitle

\begin{abstract}
    Future railway is expected to accommodate both train operation services and passenger broadband services. The millimeter wave (mmWave) communication is a promising technology in providing multi-gigabit data rates to onboard users. However, mmWave communications suffer from severe propagation attenuation and vulnerability to blockage, which can be very challenging in high-speed railway (HSR) scenarios. In this paper, we investigate efficient hybrid beamforming (HBF) design for train-to-ground communications. First, we develop a two-stage HBF algorithm in blockage-free scenarios. In the first stage, the minimum mean square error method is adopted for optimal hybrid beamformer design with low complexity and fast convergence; in the second stage, the orthogonal matching pursuit method is utilized to approximately recover the analog and digital beamformers. Second, in blocked scenarios, we design an anti-blockage scheme by adaptively invoking the proposed HBF algorithm, which can efficiently deal with random blockages. Extensive simulation results are presented to show the sum rate performance of the proposed algorithms under various configurations, including transmission power, velocity of the train, blockage probability, etc. It is demonstrated that the proposed anti-blockage algorithm can improve the effective rate by 20$\%$ in severely-blocked scenarios while maintaining low outage probability.
\end{abstract}

\IEEEpeerreviewmaketitle

\begin{IEEEkeywords}
High-speed railway, mmWave communications, hybrid beamforming, low complexity, anti-blockage.
\end{IEEEkeywords}

\section{Introduction}
 We have witnessed the rapid expansion of high-speed railway (HSR) transportation in the past decade, where railway communications are evolving at a fast pace to provide wireless broadband connections between road-side infrastructures and onboard travellers \cite{Ai2015future}.
 Innovation campaigns have been launched by railway operators, and the concept of ``smart rail'' is put forward by Shift2Rail \cite{SHIFT}. Smart rail services are envisioned to deliver consistent quality experiences, supporting a variety of services including autonomous driving, train multimedia dissemination, the Internet of Things for Railways (IoT-R), and onboard video surveillance \cite{Kato2019, Qiao2016Proactive, shen2020ai, guan2016millimeter}. Accommodating these data-craving applications in HSR scenarios is challenging to current railway communication systems, since the current narrowband Global System for Mobile Communications for Railways (GSM-R) has 9.6 kbps maximum transmission rate, and the available bandwidth of the ongoing Long Term Evolution for Railways (LTE-R) is limited to 20 MHz \cite{he2016high}. To satisfy these burgeoning demands on higher data rate, the promising millimeter wave (mmWave) system and multiple-input multiple-output (MIMO) technique, which are two typical technologies in fifth-generation (5G) era, are proposed to enhance the train-to-ground communications \cite{wu2019beef, ai2017indoor, liang2012efficient, Gao2018DynamicMB}. Furthermore, multi-user MIMO (MU-MIMO) technology is expected to reap multiplexing gains by coordinately scheduling multiple users on same spatial-temporal resources simultaneously \cite{Yan2016Position}. Owing to the plenteous mmWave spectrum bandwidth (from 30 GHz and 300 GHz), and magnificent spatial and diversity gains, the substantial increase in system capacity and spectrum efficiency can be achieved.

   Huge penetration loss and propagation attenuation are major barriers toward satisfied mmWave communication performance \cite{niu2015survey}. Fortunately, hybrid beamforming (HBF) and spatial multiplexing can be applied for further antenna and multiplexing gains via antenna arrays \cite{Dai2019H}.
   {\color{black}The correlation of mmWave sub-channels are exploited to reduce the computational complexity of hybrid beamforming \cite{yuan2018low}.}
   Plenty of existing works have targeted on low-mobility or static scenarios. In contrast, high-speed trains can move at 350 km/h or beyond, which results in distinct features such as short dwelling time and frequent handover \cite{Ai2014ChallengesTW}.
  {\color{black}More importantly, beamforming technique depends on the channel state information (CSI) which {\color{black}becomes} outdated quickly in high mobility scenarios, because the corresponding coherence time is much shorter in the rapidly time-varying channel.} Thus, developing efficient HBF techniques to address these challenges brought by high mobility in HSR communications is urgent.

 Furthermore, the high-directivity and short-wavelength make the mmWave beamforming sensitive to unexpected blockage events, when transmission signals propagate through objects such as buildings and vehicles \cite{Mhaske2016LinkQA, Lyu2019Cha}. Unfortunately, the blockage widely exists in both outdoor and indoor scenarios, which further brings the non-line-of-sight (NLoS) conditions into mmWave HSR scenarios, and blocks the line-of-sight (LoS) links between transceivers \cite{bababeik2017vulnerability}. Those random obstructions lead to intermittent connections, {\color{black}which adversely affect the stability and safety of moving high-speed trains, degrade the reliability and throughput, and deteriorate the experience of onboard passengers.}{\iffalse Eventually, it will restrict the potential of significant performance gain brought by mmWave communications \cite{Jain2018DrivenBC}. \fi} Extensive efforts have been devoted to combating the blockage in indoor environments. However, the developed approaches are costly and impractical to be applied in outdoor scenarios, considering the complicated reflectors and required dense back-up access points (APs) or base stations (BSs) deployment along the rail track. Besides, the dynamic outdoor blockages undermine the effectiveness of multi-path strategies. Few works study anti-blockage in outdoor mobility scenarios.
 Network densification and inter-BS handover are general measures against outdoor blockage \cite{Raghavan19Statistical}.
 Those methods are effective in conventional scenarios, however, in highly mobile HSR scenarios, wide distribution and long rail track {\color{black}require} high capital expenditure. Thus, it is also necessary to enhance the robustness against blockage in complicated HSR environments.

 {\iffalse In this paper, we investigate the hybrid beamforming for mmWave HSR communications to maximize the system sum rate in both blockage-free and blocked cases. The MU-MIMO system is deployed with multiple mobile relays (MR) on the top of the train to exploit the multiplexing gain and prevent penetration loss. \fi}

 {\color{black}In this paper, we consider {\color{black}the  mmWave MU-MIMO HSR communications}, with multiple mobile relays (MRs) on the top of the train to exploit the multiplexing gain and prevent the penetration loss. We investigate the efficient HBF design for mmWave HSR communications in both blockage-free and blocked scenarios.}
  In the blockage-free scenario, a two-stage HBF algorithm is proposed using the minimum mean square error (MMSE) approach for low complexity and fast convergence, and adopting the orthogonal matching pursuit (OMP) method for approximate beamformers recovery. In the blocked scenario, to achieve robust beamforming against blockage, we develop {\color{black}an anti-blockage beamforming scheme by intelligently invoking the proposed HBF algorithm,} to deal with the {\color{black}intermittent connections caused by random blockages.} Main contributions of this paper are summarized as follows.
    \begin{itemize}
    \item
    {\color{black}We formulate the sum rate maximization problem for the MU-MIMO HSR network with the optimal transceiver beamforming design, which is NP-hard and difficult to be solved directly.}
    \item
     We {\color{black}first} propose an efficient two-stage HBF algorithm with sum rate maximization in the blockage-free scenario. Specifically, in the first stage, the MMSE approach is adopted to achieve the optimal integrated hybrid beamformer design; while in the second stage, the analog and digital beamformers can be recovered from the optimal hybrid beamformer via the OMP approach {\color{black}approximately}.
    \item
    {\color{black}By detecting the link and capacity state}, the blocked scenario is categorized into three classes, then we take different strategies to {\color{black}invoke the HBF algorithm for better beamforming strategies against blockage}. We also analyze the computational complexity of the proposed algorithms, which is of polynomial-time complexity.
    \item
    We demonstrate that the proposed HBF algorithms can improve the system capacity significantly in both blockage-free and blocked scenarios, through extensive simulations. The outage probability is also largely reduced {\color{black}by the proposed anti-blockage algorithm. }
    \end{itemize}

    The remainder of this paper is organized as follows. We review the related work in Section II. Section III presents the system model and casts the problem formulation. In Section IV, we elaborate on the design of the HBF algorithm in the blockage-free scenario. Section V details the anti-blockage beamforming design and blockage performance metrics in the blocked scenario. We evaluate the algorithms by simulations in Section VI, followed by concluding remarks and future works in Section VII.

\section{Related Work}

\subsection{Hybrid Beamforming}
 There are significant efforts on the HBF for the augmented capacity of mmWave systems. To combat the huge propagation and penetration loss in mmWave systems, directional beamforming acts as a key enabler to enhance the link capacity and expand the transmission coverage by deploying large antenna arrays \cite{Kutty2016Beamforming, Yan2018StableBW}. To support multi-stream/multi-user demands and facilitate the tradeoff between the performance efficiency and hardware cost, HBF technique is proposed \cite{Molisch2017Hybrid, yu2018low, wu2019fast}. {\color{black}Robert} \textit{et al.} studied the hybrid precoding and codebook design with limited feedback for MU-MIMO mmWave systems in frequency selective channels \cite{Alkhateeb2016FrequencySH}.
 Independent transceiver design was assumed, indicating the beamformers and combiners were devised to maximize the mutual information at transceiver sides, respectively.

  {\color{black}With high-accuracy continuous localization and dynamic tracking of high-speed train, position information can be used for simplifying beam training process, Doppler shift compensation, and timing alignment \cite{talvitie2019positioning}.}
  To provide Internet access to onboard passengers in high-speed trains, Yan \textit{et al.} presented a position-based feedback scheme to support direct communications between track-side BS and onboard passengers \cite{Yan2016Position}. Passengers were firstly clustered into different sets according to their channel quality. Then, passengers in the same set were associated with the same feedback position and beamforming vector from the pre-assigned codebook. To alleviate the impact of the penetration loss and frequent handover, Song \textit{et al.} proposed some novel mmWave architectures with multiple MRs deployed on the train \cite{Song2016Millimeter}. The digital beamforming (DBF) and spatial multiplexing were integrated, which can support concurrent beam connections between multiple ground antenna arrays and multiple onboard antenna arrays.

 However, with the increase of large-scale mmWave antenna elements, the high hardware cost and energy consumption are inevitable for DBF in conventional HSR systems. Furthermore, the communication capacity fluctuates with the rapidly time-varying channel in HSR, which requires a robust beamforming scheme. Therefore, in this paper, we investigate the elaborated HBF design to balance the cost and efficiency, taking into account the fast time-varying multi-path channel and potential blocked scenarios.

 \subsection{Anti-blockage Approaches}
 In realistic mobile networks, {\color{black}wireless communication} links may be blocked by geographical/topographical blockages. The blockages can be divided into two classes, the self-blockage from user's hand or other body parts, and general blockages including other human bodies, vehicles, foliage, buildings, and other objects \cite{Raghavan19Statistical}. For a better understanding of the NLoS conditions rising from random blockage effects, various works have targeted on the blockage models and performance analysis \cite{Lyu2019Cha, bababeik2017vulnerability, Jain2018DrivenBC}. 
 To combat random blockages in practical scenarios, a variety of solutions have been proposed \cite{Jain2018DrivenBC, Yang2015, Wang2010ExploringMC, Niu2016ExploitingMR, Niu2019RelayAssistedAQ, Kim2017RelayassistedHT,  Zhang2012ImprovingNT, Ramrez2017OnOM}.

 In general, when dominant signal propagation paths between the transceivers are blocked by unexpected objects, the remedies are three-fold: the transmitter (TX) side, the receiver (RX) side, and intermediate links. From the point of deploying backup TX, {\textcolor[rgb]{0.00,0.00,0.00}{alternative}} APs/BSs can be densely deployed for substitute direct paths from new transmitters to the given RX, and to support diversity switching \cite{Jain2018DrivenBC, Zhang2012ImprovingNT, Ramrez2017OnOM}. In terms of the RX side, other receivers/nodes can be harnessed as a one-hop or multi-hop relay, which facilities connection restoration bypassing the obstacles \cite{Niu2019RelayAssistedAQ, Kim2017RelayassistedHT, Wang2010ExploringMC}. For intermediate links, it can turn to NLoS paths, transmission scheduling, and multi-hop routing, to exploit the spatial multiplexing gain \cite{Yang2015, Niu2016ExploitingMR}.

 Current approaches such as backup-BS, backup-AP, and multi-relay, are primarily suitable for short-range and indoor scenarios, which are costly and impractical to be implemented in outdoor scenarios. The approaches relying on NLoS paths, also suffer from severe attenuation and absorption. 
 {\color{black}To combat outdoor blockage, conventional network densification solutions require frequent handover and incur heavy signaling overhead, which are not suitable for HSR scenarios.}
 As an extension of our preliminary work \cite{Meilin2019ICC}, an anti-blockage beamforming strategy is proposed based on the HBF algorithm in blockage-free scenarios for HSR communications in this paper, with comprehensive simulation validations.

\section{System Model and Problem Formulation}

\subsection{\color{black}System Model}

\begin{figure}
\centering
\includegraphics[width=9cm]{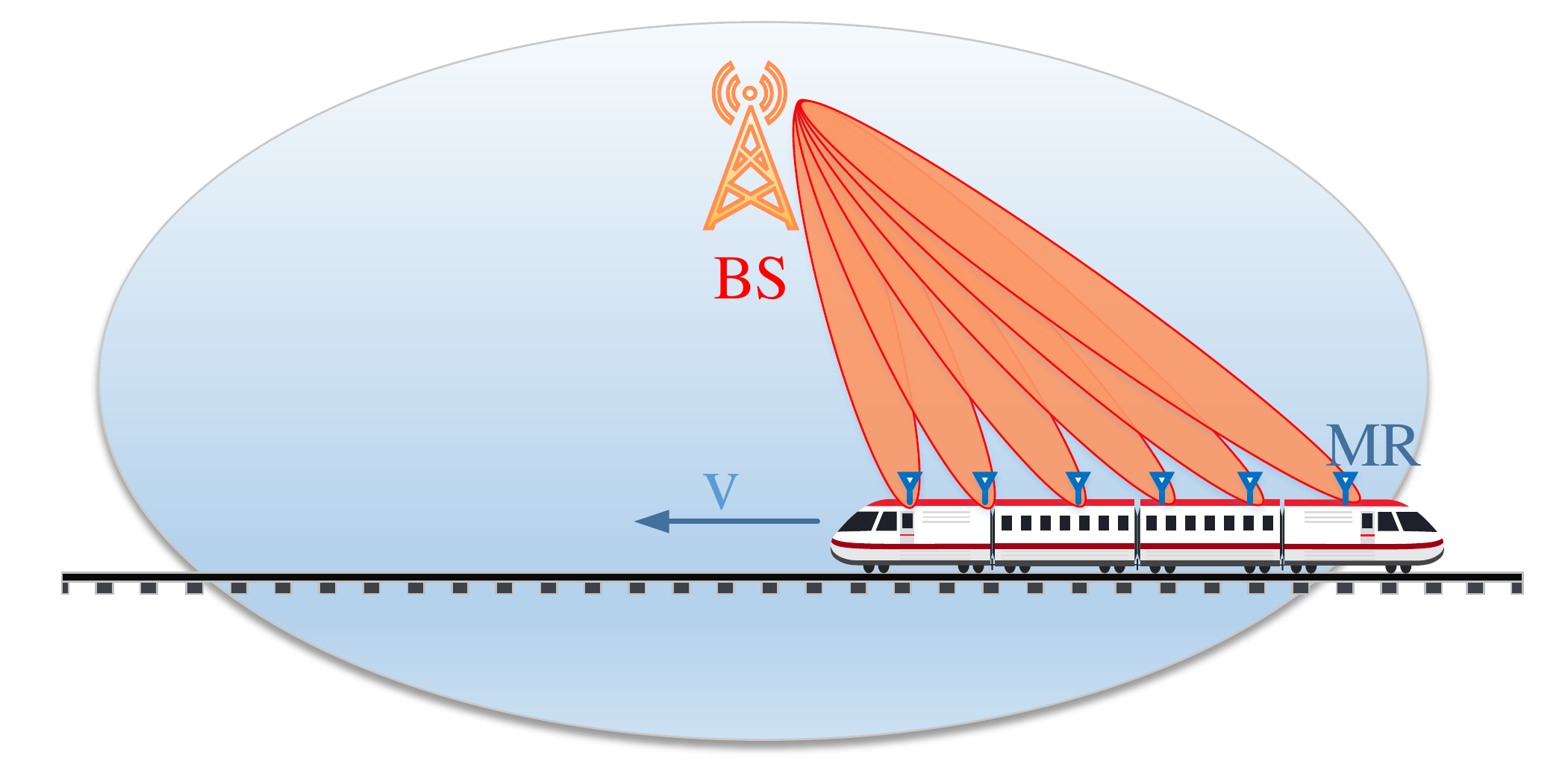}
\caption{The mmWave MU-MIMO system for HSR communications.}
\label{system_model}
\end{figure}

 As shown in Fig. \ref{system_model}, we adopt the mmWave MU-MIMO system model for HSR communications. Consider a single-cell scenario, where a train is traversing the coverage of the track-side BS. The track-side BS supports train-to-ground bidirectional communications via an array of antennas, which enables high data-rate links at mmWave frequency due to the proximity to the tracks.

 To overcome severe penetration loss of the train bodyshell, we consider a two-hop network that takes MRs as relays. On one hand, the MRs are deployed on the rooftop of the train and communicate with the BS through radio access links. On the other hand, the MR serves onboard users via the APs installed inside each carriage, to avoid the penetration loss and frequent handover. Note that, multiple radio access technologies can be enabled at the access links inside the train, including LTE/WiFi/3G, while the BS-MR links can be both at sub-6 GHz and millimeter wave frequencies. According to \cite{Zhang2015OptimalPA}, the links between the track-side BS and MRs are the main capacity bottleneck. Therefore, in this work, we resort to the mmWave BS-MR communications for higher capacity.

 The system performance may degrade when transmission links encounter random blockage events in practical HSR scenarios, such as the NLoS conditions caused by track-side obstructions including buildings, viaducts, trees, etc. By exploiting the spatial diversity gain, multiple MRs are mounted on the train, which enhances the system throughput and anti-blockage performance. All transceivers are equipped with multiple antennas, thus forming {\color{black}high directional communications, and we target on the downlink beamforming design}.

 \begin{figure*}
 \centering
 \includegraphics[width=12.0cm]{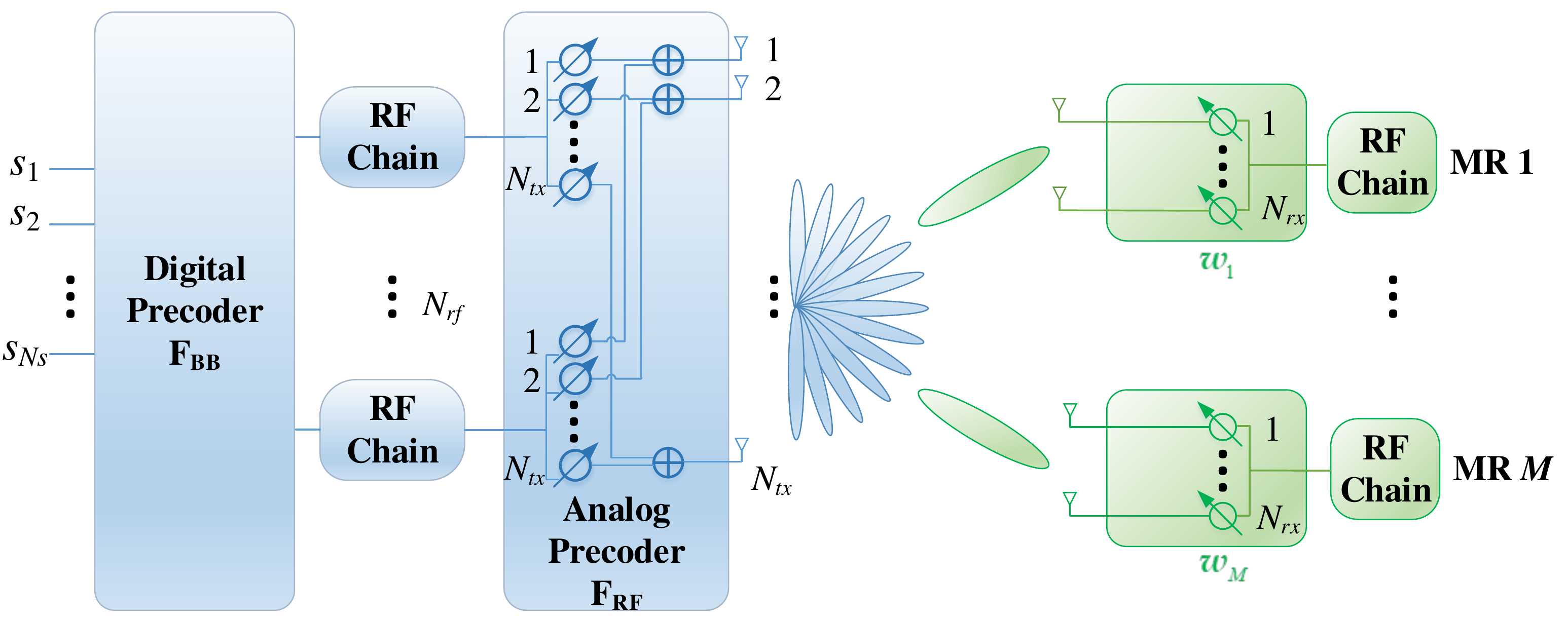}
 \caption{\color{black}Transceiver architecture of hybrid beamforming.}
 \label{hybrid_BF}
 \end{figure*}

\subsection{Hybrid Beamforming Structure}
 As shown in Fig. \ref{hybrid_BF}, we consider the HBF design consisting of low-complexity analog beamforming (ABF) and high-efficiency DBF.
 The HBF structure is implemented at the BS side with $N_{tx}$ antennas through $N_{rf}$ radio frequency (RF) chains, then delivers $N_s$ data streams to $M$ MRs. Without loss of generality, we assume that $N_s \leq N_{rf} \leq N_{tx}$, and $N_s$ is set to $M$ to fully exploit the channel gain. Generally, $M$ out of $N_{rf}$ RF chains {\color{black}are} chosen to serve $M$ MRs.

 At baseband (BB) processing, the symbol vector $\bm s =  [s_1,\ldots,s_{N_s}]^T $ corresponding to the MRs is firstly digitally precoded by the digital BB precoder $\mathbf{F}_{\rm BB}$, to suppress the inter-user interference, and $(\cdot)^T$ denotes the transpose operator. Afterward, RF chains up-convert the precoded signals to TX antennas via the analog RF beamformer matrix $\mathbf{F}_{\rm RF}$. ABF can be implemented with analog phase shifters, and reduce the hardware cost and power consumption in mixed signal components. To improve the system flexibility, a fully connected structure is adopted such that each RF chain can be connected to all TX antennas.

  The transmitted signal $\bm x$ at the TX side is represented as
  \begin{equation}
  \bm x=\mathbf{F}_{\rm RF}\mathbf{F}_{\rm BB}\bm s,
  \end{equation}
 where the analog beamformer can be expressed as $\mathbf{F}_{\rm RF} = [\bm {f}_{\rm RF}^1, \ldots, \bm {f}_{\rm RF}^{N_{rf}}]\in \mathbb{C}^{N_{tx} \times N_{rf}}$, the digital beamformer can be represented as $\mathbf{F}_{\rm BB} = [\bm f_{\rm BB}^1, \ldots, \bm {f}_{\rm BB}^{N_{s}}] \in \mathbb{C}^{N_{rf} \times N_s}$,
 and $\bm x \in \mathbb{C}^{N_{tx}}$. \textcolor[rgb]{0.00,0.00,0.00}{Generally, we assume $\mathbb{E}[\bm s \bm s^H] = \mathbf I_{N_s}$, where $\mathbb{E}[\cdot]$ denotes the expectation operator, and $(\cdot)^H$ denotes the conjugate transpose operator.}

 Due to the limited space and power supply at the RX side, each MR is equipped with an easy-deployable ABF precoder, including an RF chain and an array of $N_{rx}$ antennas. At each time slot, each MR is capable of receiving one data stream via directional beam from the BS, thus concurrent data transmission is supported.

  The processed signal at MR $m$, $\bm \hat{s}_m \in \mathbb{C}$, can be given by
  \begin{equation}\label{RX_sm}
  \begin{aligned}
  \bm \hat{s}_m = &{\bm w_m^H\mathbf{H}}_m \mathbf{F}_{\rm RF}\mathbf{F}_{\rm BB}\bm s + \bm w_m^H \bm n_m \\
  =&{\bm w_m^H\mathbf{H}}_m \sum_{i=1}^{M}\mathbf{F}_{\rm RF}\bm f_{\rm BB}^i s_i + \bm w_m^H \bm n_m,
  \end{aligned}
  \end{equation}
 where $\bm w_m \in \mathbb{C}^{N_{rx}}$ is the analog post-coding beamforming combiner at MR $m$, $\bm n_m \sim \mathcal{CN}(\mathbf{0}, \sigma_m^2 \mathbf I) \in \mathbb{C}^{N_{tx}} $ is a white Gaussian noise vector, and ${\mathbf{H}}_m \in \mathbb{C}^{N_{rx} \times N_{tx}}$ represents the channel matrix between the BS and MR $m$. Denote by $\mathbf{W} = \{\bm w_m\}_{m=1}^M$ the set of all receiver beamformers. {\color{black}Note that, $\bm s $ and $\bm n$ are uncorrelated random vectors.}

\subsection{Channel Model}
 The mmWave MIMO channel between transceiver antenna pairs is assumed to be frequency-selective. To incorporate the characteristics of both mmWave communications and HSR networks, a geometry-based stochastic channel model is adopted for 3GPP mmWave HSR scenarios, taking into account the spatial consistency and non-stationarity \cite{Yang2018A}. Given the frequency dependency, Yang \emph{et al.} model the wideband channel as the summation of the channel impulse response on separate sub-channels. The channel impulse response on each sub-channel is considered to be frequency-flat, and is a combination of the LoS path, reflected multi-path components, and scattered multi-path components. The complex amplitude on the sub-channel $k$ at time $t$ can be expressed as \cite{Yang2018A}
   \begin{equation}
   \begin{aligned}
    g_k(t)= & \sum_{l=1}^{L} g_{k,l}(t)= g_{k,LoS}(t) + g_{k,Ref}(t) + g_{k,Sca}(t)
    \\
  =  & \frac{c\cdot e^{-j 2 \pi f_{k}  \tau }}{4 \pi f_{k} d_{T,R}}
            + \sum_{p=1}^{N_{\rm Ref}} \frac{c \cdot  R(t) \cdot e^{j \phi_p}}{4 \pi f_{k} d_p}
         \\ &  + \sum_{q=1}^{N_{\rm Sca}} \frac{c \cdot S(t) \cdot e^{j \phi_q} }{4 \pi f_{k} d_{T,q}d_{R,q}} ,
   \end{aligned}
   \label{conv:Channel_gain}
  \end{equation}
 where $c$ denotes the speed of light, $f_{k}$ represents the carrier frequency at sub-channel $k$, $\tau  = d_{T,R}/c$ denotes the propagation delay of the LoS path, $R(t)$ and $S(t)$ indicate the link attenuation, $\phi_p$ and $\phi_q$ indicate the phase of corresponding reflected/scattered paths. Four distance terms including $d_{T,R}$, $d_{T,q}$, $d_{R,q}$ and $d_p$ stand for the corresponding distance between the transceiver, between the TX and the $q$-th scatter, between the RX and the $q$-th scatter, and the length of the $p$-th reflected path. It is assumed that there are $L$ paths, including $N_{\rm Ref}$ dominant reflective multi-paths and $N_{\rm Sca}$ effective scattered multi-paths, with $L = 1 + N_{Ref} + N_{Sca}$.

 The channel matrix ${\mathbf{H}}$ between the BS and RX is given by the summation of the channel matrix on each sub-channel ${\mathbf{h}_k}$
\textcolor[rgb]{0.00,0.00,0.00}{\begin{equation}
   \begin{aligned}
   {\mathbf{H}} &= \sqrt{\frac{N_{tx}N_{rx}}{L}} \sum_{k=1}^{K}\sum_{l=1}^{L} g_{k,l}(t){\bm a}_R(\varphi_l,\lambda_k) {\bm a}_T^H(\phi_l,\lambda_k),
   \end{aligned}
   \label{conv:Channel_matrix}
  \end{equation}}where the channel gain $g_{k,l}$ can be obtained from (\ref{conv:Channel_gain}),
   $\lambda_k$ is the wavelength of the sub-channel $k$ with $\lambda_k= c/{f_k}$, $\varphi_l$ and $\phi_l$ are the angles of arrival and departure of path $l$, ${\bm a}_T(\phi,\lambda)$ is the TX array response vector, which is given by
    \textcolor[rgb]{0.00,0.00,0.00}{ \begin{equation}
   \begin{aligned}
  {\bm a}_T(\phi,\lambda) = \frac{1}{\sqrt{N_{tx}}} [1, e^{j\frac{2\pi}{\lambda}\triangle_t \sin(\phi)},\ldots,e^{j(N_{tx}-1)\frac{2\pi}{\lambda}\triangle_t \sin(\phi)}]^T
     \end{aligned}
   \label{conv:a_T}
  \end{equation}}where $\triangle_t$ denotes the distance between TX antenna elements with $\triangle_t = \lambda/2$. The array response vector at the RX side, ${\bm a}_R(\varphi,\lambda)$, can be written in the same way.

   Given the processed signal at MR $m$ (\ref{RX_sm}) and the channel matrix (\ref{conv:Channel_matrix}), the achievable signal-to-interference-plus-noise ratio (SINR) can be written as
 \textcolor[rgb]{0.00,0.00,0.00}{\begin{equation}\label{conv:SINR}
   \gamma_m =  \frac{|\bm w_m^H {\mathbf{H}_{m} } \mathbf{F}_{\rm RF}\bm f_{\rm BB}^m s_m|^2}
{\sum_{i=1,i\neq m}^M  |\bm w_m^H {\mathbf{H}_{m} } \mathbf{F}_{\rm RF}\bm f_{\rm BB}^i s_i|^2 + \bm w_m^H \bm w_m \sigma_m^2},
 \end{equation}}where {\color{black}$\sigma_m^2$ is noise power}.
   {\color{black}Note that, the Doppler effect raised due to the high-mobility can be effectively tracked and compensated, based on the high-accuracy location information and regular movements of high-speed train \cite{Yan2016Position}.}

 From (\ref{conv:SINR}), the data rate with unit bandwidth at MR $m$ is given by the Shannon formula
 \begin{equation}\label{conv:rate}
 \begin{aligned}
    R_m  = \log_2\left( 1+ \gamma_m \right).
 \end{aligned} \end{equation}

\subsection{Problem Formulation}
 Considering the system capacity, we target on the transceiver beamformer/combiner design $\{\mathbf{F}_{\rm RF}, \mathbf{F}_{\rm BB},\mathbf{W}\}$, with the sum rate utility maximization. The optimization problem can be formulated as
 \begin{subequations}\label{conv:P1}
 \begin{align}
  \mathcal{P} 1:  \quad  \max_{\mathbf{F}_{\rm RF}, \mathbf{F}_{\rm BB},\mathbf{W}} \quad  &\sum_{m=1}^M R_m \label{P1_obj}\\
     \ \rm{s.t.} \qquad \ &
    {\rm{Tr}(\mathbf{F}_{\rm BB}^H \mathbf{F}_{\rm RF}^H \mathbf{F}_{\rm RF} \mathbf{F}_{\rm BB})} \leq P_0, \label{P1_cons}\\
    & \bm {f}_{\rm RF}^n \in \mathcal{D}, \quad n\in \{1,\ldots, N_{rf}\}
 \end{align}
 \end{subequations}
 where $P_0$ indicates the power budget at the BS, and operator $\rm{Tr}(\cdot)$ denotes the matrix trace operator. \textcolor[rgb]{0.00,0.00,0.00}{Generally, the analog beamformer can be realized based on a predefined codebook, \textcolor[rgb]{0.00,0.00,0.00}{$\mathcal{D}$, as standardized in IEEE 802.15.3c \cite{IEEE802.15}}. Essentially, the ABF design can be considered as selecting beams $\{\bm {f}_{\rm RF}^n\}_{n=1}^{N_{rf}}$ from the predefined codebook to compose the RF beamformer matrix $\mathbf{F}_{\rm RF}$.} The TX ABF/DBF beamformers can be {\color{black}considered} as an integrated hybrid beamformer $\mathbf{F} = \mathbf{F}_{\rm RF}\mathbf{F}_{\rm BB}$ with $\mathbf{F} \in \mathbb{C}^{N_{rx} \times N_{s}}$. For brevity, we assume $\bm f_m = \mathbf{F}_{\rm RF}\bm f_{\rm BB}^m$ and $\mathbf{F} = [\bm f_1, \ldots, \bm f_M]$. The constraint (\ref{P1_cons}) can be rewritten as
 \begin{equation*}
   \rm{Tr}(\mathbf{F}^H \mathbf{F}) \leq P_0.
 \end{equation*}

 Problem $\mathcal{P} 1$ is challenging to be solved due to the following reasons. First, $\mathcal{P} 1$ is not a convex problem as the objective function is nonconvex logarithmic, and the constraints are quadratic. Meanwhile, the coupling of ABF and DBF further makes the objective function and power constraints complicated. In addition, the optimal RF beam steering is constrained by the predefined beam codebook via analog phase-shifters. Consequently, this optimization problem is NP-hard and difficult to be solved directly. In the following section, we propose a two-stage transceiver precoders/combiners design to solve problem $\mathcal{P} 1$.

\section{Two-Stage Hybrid Beamforming Design\\ in Blockage-Free Cases}
 In this section, the blockage-free scenario is considered, and a two-stage HBF algorithm for sum rate maximization in the mmWave HSR scenario is proposed. Firstly, the analog beamformer is tackled together with the digital beamformer as an integrated hybrid beamformer $\mathbf{F}$, which can be obtained by the MMSE algorithm. 
 In the second stage of the ABF/DBF matrix design, i.e., $\mathbf{F}_{\rm RF}$ and $\mathbf{F}_{\rm BB}$, are approximately recovered from $\mathbf{F}$ via the OMP approach.

 \subsection{First Stage: MMSE-based Transceiver Beamforming}
 In this stage, we start from the design of TX hybrid beamformer $\mathbf{F}$ and RX combiners $\mathbf{W}$, to maximize the sum rate utility of the mmWave BS-MR links.

 Problem $\mathcal{P}1$ is intractable mainly due to the logarithmic objective function. Targeting on the nonconvex sum rate maximization, we transform this problem into an equivalent problem sharing the {\color{black}same optimal solution}.
 {\color{black}Inspired by \cite{lin2019hybrid}, we adopt the weighted minimum mean square error method which can facilitate the problem solution with integrating the sum rate maximization problem and the spectral efficiency maximization problem.}
 In particular, we resort to a linear beamformer approach with the mean square estimation error (MSE) estimator \cite{Shi2011An, Wang2017SpectrumAE}.
 Based on (\ref{RX_sm}) {\color{black}and the independence of $\bm s$ and $\bm n$}, the MSE at MR $m$ is given by
  \begin{equation}\label{e_m}
    \begin{aligned}
       e_m \triangleq &\mathbb{E}[(\hat{s}_m - s_m )^H (\hat{s}_m - s_m)]\\
       = & \mathbb{E}[
        ({\bm w_m^H\mathbf{H}}_m \sum_{i=1}^{M}\mathbf{F}_{\rm RF}\bm f_{\rm BB}^i s_i + \bm w_m^H \bm n_m  - s_m )^H \\
      &\quad \cdot ({\bm w_m^H\mathbf{H}}_m \sum_{i=1}^{M}\mathbf{F}_{\rm RF}\bm f_{\rm BB}^i s_i + \bm w_m^H \bm n_m- s_m)
       ]
       \\
       = & | 1- \bm w_m^H {\mathbf{H}_m } \bm{f}_m| ^2
      + \bm w_m^H \bm w_m \sigma_m^2
      + \sum_{i = 1, i\neq m}^{M}|\bm w_m^H {\mathbf{H}_m } \bm{f}_i| ^2.
    \end{aligned}
 \end{equation}
 \begin{theorem}
    The original problem $\mathcal{P}1$ is equivalent to the following problem $\mathcal{P}2$ with the same optimal beamformers:
    \begin{subequations}
        \begin{align}
         \mathcal{P}2: \quad  \max_{\mathbf{F},\mathbf{W}} \quad  &\sum_{m=1}^M -\log_2(e_m) \\
          \rm{s.t.}  \quad & \rm{Tr}(\mathbf{F}^H \mathbf{F}) \leq P_0,\\
           & \bm {f}_{\rm RF}^n \in \mathcal{D}, \quad n\in \{1,\ldots, N_{rf}\}.
        \end{align}
    \end{subequations}
 \end{theorem}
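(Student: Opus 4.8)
The plan is to eliminate the receive combiners $\mathbf{W}$ from both problems by solving the inner optimization over each $\bm w_m$ in closed form, and then to show that the two resulting problems in $\mathbf{F}$ (equivalently in $\mathbf{F}_{\rm RF},\mathbf{F}_{\rm BB}$) are literally the same optimization problem, so they must share optimal beamformers and optimal value.

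First I would fix $\mathbf{F}$, hence each $\bm f_m=\mathbf{F}_{\rm RF}\bm f_{\rm BB}^m$, and note that in $\mathcal{P}2$ the objective $\sum_m -\log_2(e_m)$ decouples across $m$, with $e_m$ depending only on $\bm w_m$; since $-\log_2(\cdot)$ is strictly decreasing, $\max_{\bm w_m}\{-\log_2 e_m\}=-\log_2(\min_{\bm w_m}e_m)$, so it suffices to minimize the unconstrained quadratic $e_m$ of (\ref{e_m}). Writing $e_m = 1 - 2\,\mathrm{Re}(\bm w_m^H\mathbf{H}_m\bm f_m) + \bm w_m^H\mathbf{J}_m\bm w_m$ with $\mathbf{J}_m \triangleq \sum_{i=1}^{M}\mathbf{H}_m\bm f_i\bm f_i^H\mathbf{H}_m^H + \sigma_m^2\mathbf{I}$, which is positive definite because $\sigma_m^2>0$, the function is strictly convex in $\bm w_m$, so its unique stationary point is the global minimizer: the MMSE combiner $\bm w_m^\star = \mathbf{J}_m^{-1}\mathbf{H}_m\bm f_m$, with $\min_{\bm w_m}e_m = 1 - \bm f_m^H\mathbf{H}_m^H\mathbf{J}_m^{-1}\mathbf{H}_m\bm f_m$.

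Next I would write $\mathbf{J}_m = \mathbf{B}_m + \mathbf{H}_m\bm f_m\bm f_m^H\mathbf{H}_m^H$, where $\mathbf{B}_m \triangleq \sum_{i\neq m}\mathbf{H}_m\bm f_i\bm f_i^H\mathbf{H}_m^H + \sigma_m^2\mathbf{I}$ is the interference-plus-noise covariance, and apply the Sherman--Morrison identity to obtain $\bm f_m^H\mathbf{H}_m^H\mathbf{J}_m^{-1}\mathbf{H}_m\bm f_m = \beta_m/(1+\beta_m)$ with $\beta_m \triangleq \bm f_m^H\mathbf{H}_m^H\mathbf{B}_m^{-1}\mathbf{H}_m\bm f_m$, hence $\min_{\bm w_m}e_m = 1/(1+\beta_m)$. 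In parallel, for $\mathcal{P}1$ the SINR $\gamma_m$ of (\ref{conv:SINR}) is invariant to scaling of $\bm w_m$ and is a generalized Rayleigh quotient in $\bm w_m$, so $\max_{\bm w_m}\gamma_m = \beta_m$, attained at any positive multiple of $\mathbf{B}_m^{-1}\mathbf{H}_m\bm f_m$; the same Sherman--Morrison computation shows $\bm w_m^\star=\mathbf{J}_m^{-1}\mathbf{H}_m\bm f_m$ is exactly such a multiple, so the MMSE combiner is simultaneously optimal for $\mathcal{P}1$.

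Finally I would combine these facts. After optimizing out $\mathbf{W}$, $\mathcal{P}2$ reduces to $\max_{\mathbf{F}}\sum_m \log_2(1+\beta_m)$ subject to $\mathrm{Tr}(\mathbf{F}^H\mathbf{F})\le P_0$ and $\bm f_{\rm RF}^n\in\mathcal{D}$; using $R_m=\log_2(1+\gamma_m)$ and $\max_{\bm w_m}\gamma_m=\beta_m$, $\mathcal{P}1$ reduces to $\max_{\mathbf{F}}\sum_m \log_2(1+\beta_m)$ under the identical constraints. These are the same problem, so they have the same optimal $\mathbf{F}$ and hence the same optimal $\mathbf{F}_{\rm RF},\mathbf{F}_{\rm BB}$, the same optimal value, and the common MMSE combiner $\{\bm w_m^\star\}$ is optimal in both. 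I expect the main obstacle to be the algebra of the Sherman--Morrison reduction showing $\min_{\bm w_m}e_m=1/(1+\max_{\bm w_m}\gamma_m)$, together with the slightly delicate point that the combiner that is "optimal" for $\mathcal{P}1$ is determined only up to a receive-side scalar, so one must verify that it can be taken to coincide exactly with the MMSE combiner in order for the phrase ``the same optimal beamformers'' to hold literally.
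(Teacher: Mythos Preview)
Your proposal is correct and follows the same high-level strategy as the paper: for fixed $\mathbf{F}$, optimize each $\bm w_m$ out of both problems and show the two reduced problems in $\mathbf{F}$ coincide. The paper's execution is more bare-handed: it writes down the MMSE combiner from the first-order condition, substitutes it into $e_m$ and into the SINR expression, and verifies by direct algebra that $R_m^\ast=-\log_2 e_m^\ast$, without invoking Sherman--Morrison or the generalized Rayleigh quotient. Your route, by contrast, identifies $\min_{\bm w_m}e_m=1/(1+\beta_m)$ and $\max_{\bm w_m}\gamma_m=\beta_m$ as two expressions of the same quantity $\beta_m=\bm f_m^H\mathbf{H}_m^H\mathbf{B}_m^{-1}\mathbf{H}_m\bm f_m$, and then explicitly checks that the MMSE combiner is a scalar multiple of the max-SINR combiner. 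This last point is something the paper leaves implicit: it plugs the MMSE $\bm w_m^\ast$ into the SINR but never argues that this combiner actually maximizes $\gamma_m$, which is what one needs to conclude that the same $(\mathbf{F},\mathbf{W})$ is optimal in $\mathcal{P}1$. Your treatment of that issue, and of the scale ambiguity you flag at the end, makes the equivalence claim tighter than the paper's own argument.
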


\begin{proof}
 We first formulate an MMSE problem with an objective of minimizing the sum MSE of all receivers, i.e.,
 \begin{equation}
 \begin{aligned}
     \min_{\mathbf{F},\mathbf{W}} \quad  &\sum_{m=1}^M e_m
     \\
      \rm{s.t.} \quad&
         \rm{Tr}(\mathbf{F}^H \mathbf{F}) \leq P_0,\\
    & \bm {f}_{\rm RF}^n \in \mathcal{D}, \quad n\in \{1,\ldots, N_{rf}\}.
 \end{aligned}
 \label{conv:Perror}
 \end{equation}
 Fixing the TX hybrid beamformer $\mathbf{F}$, we can obtain the optimal RX combiner at MR $m$ by solving
 \begin{equation}\label{arg_min_em}
   \bm w_m^* = \arg \min_{\bm w_m} \ \  \sum_{m=1}^M e_m.
 \end{equation}
 According to the first-order optimality condition $\frac{\partial e_m}{\partial \bm w_m} = \mathbf 0$ \cite{boyd2004convex}, the optimal combiner at the MR $m$ is obtained as
  \begin{equation}\label{w_m}
    \bm w_m^* = \frac{\mathbf{H}_m \bm{f}_m}{\sum_{i = 1}^{M} \|\mathbf{H}_m\bm{f}_i\|^2_2 + \sigma_m^2}.
\end{equation}
Substituting (\ref{w_m}) into (\ref{e_m}), we can obtain the optimal MSE as
 \begin{equation}\label{e_m*}
 \begin{aligned}
    e_m^*     =  & 1 - \frac
    {
    \|\mathbf{H}_m \bm{f}_m\|^2
    }
    {
    \sum_{i = 1}^{M}  \|\mathbf{H}_m\bm{f}_i\|^2_2 + \sigma_m^2
    }\\
    = & \frac
    {
           \sum_{i = 1, i \neq m}^{M}  \|\mathbf{H}_m\bm{f}_i\|^2_2 + \sigma_m^2
    }
    {
      \sum_{i = 1}^{M}  \|\mathbf{H}_m\bm{f}_i\|^2_2 + \sigma_m^2
    }    .
    \end{aligned}
 \end{equation}
 Based on (\ref{w_m}), the achievable SINR at MR $m$ (\ref{conv:SINR}) is written as
  \begin{equation}\label{SINR_opt}
 \begin{aligned}
   \gamma_m^* =  \frac
    {
     \|\mathbf{H}_m\bm{f}_m\|^2_2
    }
    {
      \sum_{i = 1, i \neq m}^{M}  \|\mathbf{H}_m\bm{f}_i\|^2_2 + \sigma_m^2
    },
 \end{aligned}
 \end{equation}
 and the data rate in (\ref{conv:rate}) can be rewritten as
 \begin{equation}\label{R_e}
 \begin{aligned}
    R_m^* = & \log_2\left( 1+
      \frac
    {
     \|\mathbf{H}_m\bm{f}_m\|^2_2
    }
    {
      \sum_{i = 1, i \neq m}^{M}  \|\mathbf{H}_m\bm{f}_i\|^2_2 + \sigma_m^2
    }
     \right) \\
     = &\log_2\left(
    \frac
    {
     \sum_{i = 1}^{M}  \|\mathbf{H}_m\bm{f}_i\|^2_2 + \sigma_m^2
    }
    {
      \sum_{i = 1, i \neq m}^{M}  \|\mathbf{H}_m\bm{f}_i\|^2_2 + \sigma_m^2
    }
 \right)
 \\
    = & \log_2\left( \frac{1}{e_m^{*}}\right)
     =  -\log_2\left(e_m^{*}\right).
 \end{aligned}
 \end{equation}
    Given the same constraints and the equivalency of objective functions as derived above, the original problem $\mathcal{P}1$ and problem $\mathcal{P}2$ are proved to be equivalent.
\end{proof}

To solve the optimal TX hybrid beamformer, we first give the following Lemma.
\begin{lemma}
   The problem $\mathcal{P}2$ is equivalent to the following problem $\mathcal{P}3$ with the same optimal beamformers:
\begin{equation}
   \begin{aligned}
      \mathcal{P} 3: \quad \max_{\mathbf{F}, \mathbf{W},\bm{\alpha}} \quad
      & \sum_{m=1}^M \log_2(\alpha_m) - \frac{\alpha_m e_m}{\ln2}\\
      \ \rm{s.t.} \quad \
      &
              \rm{Tr}(\mathbf{F}^H \mathbf{F}) \leq P_0,\\
    & \bm {f}_{\rm RF}^n \in \mathcal{D}, \quad n\in \{1,\ldots, N_{rf}\}.
        \end{aligned}
 \label{conv:P2}
 \end{equation}
\end{lemma}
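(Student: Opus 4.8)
The plan is to view $\mathcal{P}3$ as $\mathcal{P}2$ augmented by the auxiliary variables $\bm{\alpha} = (\alpha_1,\ldots,\alpha_M)$, and to show that partially maximizing the $\mathcal{P}3$ objective over $\bm{\alpha}$, with $\mathbf{F}$ and $\mathbf{W}$ held fixed, reproduces the $\mathcal{P}2$ objective up to an additive constant. First I would note that for every feasible $\mathbf{F},\mathbf{W}$ the quantity $e_m$ in (\ref{e_m}) is strictly positive: it is the sum of $|1-\bm w_m^H\mathbf{H}_m\bm f_m|^2\ge 0$, the interference terms $\ge 0$, and $\bm w_m^H\bm w_m\sigma_m^2$, and it equals $1$ when $\bm w_m=\mathbf{0}$. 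Hence the map $\alpha_m \mapsto \log_2(\alpha_m) - \alpha_m e_m/\ln 2$ is well defined and strictly concave on $\alpha_m>0$, so its unique stationary point is its global maximizer.

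Next I would carry out the inner optimization. Since the $\mathcal{P}3$ objective is separable across $m$ in $\bm{\alpha}$, I set
\[
\frac{\partial}{\partial \alpha_m}\Big(\log_2(\alpha_m) - \frac{\alpha_m e_m}{\ln 2}\Big) = \frac{1}{\alpha_m \ln 2} - \frac{e_m}{\ln 2} = 0,
\]
which gives $\alpha_m^{\star} = 1/e_m$. Substituting back, each summand evaluates to $\log_2(1/e_m) - \tfrac{1}{\ln 2} = -\log_2(e_m) - \tfrac{1}{\ln 2}$, so that
\[
\max_{\alpha_m>0,\,\forall m}\ \sum_{m=1}^M\Big(\log_2(\alpha_m)-\frac{\alpha_m e_m}{\ln 2}\Big) = \sum_{m=1}^M\big(-\log_2 e_m\big) - \frac{M}{\ln 2}.
\]

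Then I would conclude the equivalence. The right-hand side is precisely the $\mathcal{P}2$ objective minus the constant $M/\ln 2$, which is independent of all optimization variables. Because $\mathcal{P}3$ and $\mathcal{P}2$ impose identical constraints on $\mathbf{F}$ (the power budget $\mathrm{Tr}(\mathbf{F}^H\mathbf{F})\le P_0$ and the codebook constraint $\bm f_{\rm RF}^n\in\mathcal{D}$) and leave $\mathbf{W}$ unconstrained, maximizing $\mathcal{P}3$ over $(\mathbf{F},\mathbf{W},\bm{\alpha})$ equals first maximizing over $\bm{\alpha}$ and then over $(\mathbf{F},\mathbf{W})$, i.e.\ $\mathcal{P}2$ shifted by a constant. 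Hence the two problems share the same set of optimal $(\mathbf{F}^{\star},\mathbf{W}^{\star})$, and the optimal auxiliary variables are $\alpha_m^{\star}=1/e_m^{\star}$ with $e_m^{\star}$ the MSE at the optimum; together with Theorem~1 this also ties $\mathcal{P}3$ back to $\mathcal{P}1$.

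I do not expect a genuine obstacle here, since this is the classical weighted-MMSE auxiliary-variable linearization; the only points requiring care are the bookkeeping ones: confirming $e_m>0$ so the logarithm is legitimate, invoking strict concavity so the stationary point is the global rather than a local maximum, and explicitly tracking the additive constant $M/\ln 2$ so that the claim of \emph{identical optimal beamformers} (not merely equal optimal values) is fully justified.
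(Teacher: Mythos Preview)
Your proposal is correct and follows essentially the same route as the paper: both introduce the auxiliary variable $\alpha_m$, differentiate $\log_2(\alpha_m)-\alpha_m e_m/\ln 2$ to obtain $\alpha_m^\star=1/e_m$, and substitute back to recover $-\log_2(e_m)$ up to an additive constant. Your version is in fact slightly more careful than the paper's, since you explicitly verify $e_m>0$, invoke strict concavity to justify that the stationary point is the global maximizer, and track the constant $M/\ln 2$; the paper simply absorbs the constant into its auxiliary function $\mathcal{U}(\alpha_m)$ and appeals to first-order optimality without these checks.
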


\begin{proof}
  To tackle the difficult logarithmic optimization problem $\mathcal{P}2$, we introduce an auxiliary function as
  \begin{equation*}
    \mathcal{U}(\alpha_m) = \log_2(\alpha_m) - \frac{\alpha_m e_m}{\ln2} + \frac{1}{\ln2},
  \end{equation*}
   which is a function of the auxiliary variable $\alpha_m$. Let $\bm{\alpha}=$ $[\alpha_1,\ldots,\alpha_M]$ denote the positive weight vector over MSE indicator $e_m$. We take the first derivative of $\mathcal{U}(\alpha_m)$ over $(\alpha_m)$ and let it to be zero. Then we can obtain the optimal solution $\alpha_m^* = \frac{1}{e_m}$ and the maximum $\mathcal{U}^{*}$ is
    \begin{equation}\label{U_opt}
      \mathcal{U}^* = -\log_2(e_m),
    \end{equation}
    according to the first-order optimality condition \cite{boyd2004convex}.
    Therefore, $\mathcal{P}2$ is proved to be equivalent to the optimization problem $\mathcal{P}3$, in the sense that the optimal beamformer solution $\mathbf{F}$ and $\mathbf{W}$, for the two problems are identical under the auxiliary variable $\alpha_m^* = \frac{1}{e_m}$.
\end{proof}

\begin{theorem}
 The optimal TX hybrid beamformer of problem $\mathcal{P}1$ can be obtained by
    \begin{equation}
     \bm f_m^* =
     \left(
        \sum_{m = 1}^M \mathbf{H}_m^H \bm w_m \alpha_m \bm w_m^H \mathbf{H}_m
        + \lambda_m^* \bm{I}_{N_{tx}}
     \right)^\dag
     \mathbf{H}_m^H \bm w_m \alpha_m,
     \label{conv:f_m}
     \end{equation}
     where $(\cdot)^\dag$ is the pseudo-inverse operator, and $\lambda_m^{*}$ is the optimal Lagrange multiplier.
  \end{theorem}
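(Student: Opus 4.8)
The plan is to leverage the equivalence chain already established: by Theorem~1 and Lemma~1, problems $\mathcal{P}1$, $\mathcal{P}2$ and $\mathcal{P}3$ share the same optimal beamformers, so it suffices to characterize the optimal $\mathbf{F}$ for $\mathcal{P}3$ (problem~(\ref{conv:P2})). I would solve $\mathcal{P}3$ by block coordinate ascent. Holding $\mathbf{W}$ and $\bm\alpha$ fixed, the terms $\sum_m\log_2(\alpha_m)$ are constant, so the $\mathbf{F}$-subproblem collapses to $\min_{\mathbf{F}}\sum_{m=1}^M\alpha_m e_m$ subject to the power budget $\mathrm{Tr}(\mathbf{F}^H\mathbf{F})\le P_0$. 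At this stage the nonconvex codebook constraint $\bm f_{\rm RF}^n\in\mathcal{D}$ is relaxed; its effect is recovered afterwards by the OMP step of the second stage, and it is precisely this relaxed subproblem whose optimizer~(\ref{conv:f_m}) characterizes.

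Next I would expand $\sum_m\alpha_m e_m$ using~(\ref{e_m}). In the stacked variable $[\bm f_1^T,\ldots,\bm f_M^T]^T$ this is a convex quadratic — each cross term $\alpha_j|\bm w_j^H\mathbf{H}_j\bm f_i|^2$ contributes the positive-semidefinite Gram matrix $\alpha_j\mathbf{H}_j^H\bm w_j\bm w_j^H\mathbf{H}_j$ — and the feasible set is a Euclidean ball, so the KKT conditions are necessary and sufficient. Introducing a multiplier $\lambda\ge 0$ for the power constraint and the Lagrangian $L=\sum_m\alpha_m e_m + \lambda(\mathrm{Tr}(\mathbf{F}^H\mathbf{F})-P_0)$, I set $\partial L/\partial \bm f_m^*=\mathbf 0$ exactly as the proof of Theorem~1 handled $\partial e_m/\partial\bm w_m$: the self term $\alpha_m|1-\bm w_m^H\mathbf{H}_m\bm f_m|^2$ gives $\alpha_m(\mathbf{H}_m^H\bm w_m\bm w_m^H\mathbf{H}_m\bm f_m-\mathbf{H}_m^H\bm w_m)$, the interference terms $\alpha_j|\bm w_j^H\mathbf{H}_j\bm f_m|^2$ coming from every $e_j$ with $j\neq m$ give $\alpha_j\mathbf{H}_j^H\bm w_j\bm w_j^H\mathbf{H}_j\bm f_m$, and the trace term gives $\lambda\bm f_m$. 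Collecting these yields the linear system
\begin{equation*}
\Big(\sum_{j=1}^M \mathbf{H}_j^H\bm w_j\,\alpha_j\,\bm w_j^H\mathbf{H}_j + \lambda\,\bm I_{N_{tx}}\Big)\bm f_m = \mathbf{H}_m^H\bm w_m\,\alpha_m ,
\end{equation*}
and solving it (with the pseudo-inverse to cover a possibly rank-deficient left-hand matrix) reproduces~(\ref{conv:f_m}) with $\lambda_m^*=\lambda$.

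To finish, I would pin down $\lambda^*$ via complementary slackness. If $\lambda=0$ already yields $\mathrm{Tr}(\mathbf{F}^H\mathbf{F})\le P_0$, the constraint is inactive and $\lambda^*=0$; otherwise $\lambda^*$ is the unique positive root of $\mathrm{Tr}(\mathbf{F}(\lambda)^H\mathbf{F}(\lambda))=P_0$, whose existence and uniqueness follow by eigen-decomposing the shared matrix $\sum_j\alpha_j\mathbf{H}_j^H\bm w_j\bm w_j^H\mathbf{H}_j$: each $\|\bm f_m(\lambda)\|_2^2$ is then strictly decreasing in $\lambda$ and vanishes as $\lambda\to\infty$, so $\lambda^*$ is obtained by one-dimensional bisection. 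Since the $\mathbf{F}$-subproblem is convex and $\mathcal{P}3$ is equivalent to $\mathcal{P}1$, this stationary point is its global optimum, which proves the theorem.

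I expect the main obstacle to be conceptual rather than computational: being explicit that Theorem~2 solves the \emph{codebook-relaxed} $\mathbf{F}$-problem, so that ``optimal'' is understood relative to the WMMSE reformulation, while the genuine analog constraint $\bm f_{\rm RF}^n\in\mathcal{D}$ is deferred to the OMP decomposition. A secondary technical point is the monotonicity argument that makes $\lambda^*$ well defined, together with the observation that alternating the update~(\ref{conv:f_m}) with the combiner update~(\ref{w_m}) and the weight update $\alpha_m^*=1/e_m$ increases the objective of $\mathcal{P}3$ monotonically and hence converges to such a stationary point.
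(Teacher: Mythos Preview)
Your proposal is correct and follows essentially the same route as the paper: invoke the $\mathcal{P}1\!\leftrightarrow\!\mathcal{P}2\!\leftrightarrow\!\mathcal{P}3$ equivalence, form the Lagrangian of the (codebook-relaxed) $\mathbf{F}$-subproblem in $\mathcal{P}3$, and solve the KKT stationarity condition to obtain~(\ref{conv:f_m}). Your derivation is in fact more explicit than the paper's---you spell out the gradient contributions term by term and justify convexity---and the only cosmetic difference is that the paper updates $\lambda$ by a subgradient step rather than your bisection (though its complexity analysis later cites bisection as well).
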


\begin{proof}
 Given the equivalence of $\mathcal{P}2$ and $\mathcal{P}3$, the optimal RX combiner $\bm w_m^*$ given by (\ref{w_m}) also satisfies $\mathcal{P}3$.
  Recall from Lemma 1 that the maximum value of $\mathcal{P}3$ lies on the point $\alpha_m^{*} = \frac{1}{e_{m}}$.
  To determine the hybrid beamformer $\mathbf{F}$, the {\color{black}Lagrangian function} associated with problem $\mathcal{P}3$ is defined as
    \begin{equation}
    \begin{aligned}
        \mathcal{L}(\mathbf{F},\lambda) = \sum_{m=1}^M \log_2(\alpha_m) - \frac{\alpha_m e_m}{\ln2} + \lambda \left(
         \rm Tr(\mathbf{F}^H \mathbf{F})
         - P_0 \right).
    \end{aligned}
    \end{equation}

\begin{algorithm}
\caption{Two-Stage Hybrid Beamforming Algorithm 
} \label{alg:the hybrid BF algorithm}
\begin{algorithmic}[1]
\REQUIRE ~~   \\
  Acquire the channel state information ${ \{\mathbf{H}_m\}_{m=1}^{M}} $.\\
  Initialize TX beamformer $\bf F$ with
  $\rm{Tr}(\mathbf{F}^H \mathbf{F}) \leq  {P_{0}}$.\\
  Adopt the codebook $\mathcal{D}$.
\ENSURE $//$ {\textit{MMSE}}(${\bf  H}$, $\mathcal{D}$)  ~~\\
\WHILE {$ \sum_{m=1}^M e_m^* \geq  \epsilon$}
\STATE
    Update RX beamformer ${\bm{w}_m}\!^*$ by (\ref {w_m})\\
\STATE
    Update MMSE coefficient $e_m^* $ by (\ref {e_m*})\\
\STATE
    Update TX hybrid beamformer $ {\bm f}_m^*$ by (\ref {conv:f_m}) \\
\ENDWHILE\\
\STATE
Output: $\!{\bf W}^*\!=\![{ {\bm w}_1}\!^*,..., { {{\bm w}}_M}\!   ^*]$, $\!{\bf  F}^*\! =\! [{\bm f}_1^*,...,  {\bm f}_M^*]$
\LASTCON $//$ {\textit{OMP}}(${\bf  F}^*$, $\mathcal{D}$) ~~
\STATE
    ${\bf  F}_{res} = {\bf  F^*}$, ${\bf F_{\rm RF} = {\rm \Phi}}$, $\bf F_{\rm BB} = {\rm \Phi}$\\
\FOR {$n = 1:{N_{rf}}$}
\STATE
    Select the most correlated beam vector $\bm d_i$ by (\ref {conv:d_i})
\STATE    Recover RF beamformer ${\bf F^*_{\rm RF}}(:,n) = {\bm d}_i$\\ \vspace{1mm}
\STATE    Recover BB beamformer ${\bf F^*_{\rm BB}} = {\bf F_{\rm RF}^\dag} {\bf  F}_{res}$\\
\STATE    Update the residual matrix by ${\bf  F}_{res} = {\bf  F}_{res} \!-\! \mathbf{F}^*_{\rm RF} \mathbf{F}^*_{\rm BB}$\\
\STATE    Update the codebook with ${\mathcal{D}}(:,j)= \bm 0$\\
\ENDFOR\\
\STATE
$
{\bf F}^*_{\rm BB} = \sqrt{P_0} \ {\bf F}^*_{\rm BB} / \|{\bf F}^*_{\rm RF}{\bf F}^*_{\rm BB} \|_2^2$, $\mathbf{F}_{recov} = {\bf F}^*_{\rm RF}{\bf F}^*_{\rm BB}
$
\STATE
Output: $\mathbf{F}_{recov}$, ${\bf F}^*_{\rm RF}$, ${\bf F}^*_{\rm BB}$.
\end{algorithmic}
\end{algorithm}

 The corresponding Karush-Kuhn-Tucker (KKT) conditions \cite{boyd2004convex} are shown as follows
 \begin{subequations}
 \begin{align}
 & \frac{\partial}{\partial{\bm{f}_m}}\mathcal{L}(\mathbf{F},\lambda) = 0, \quad m = 1,\ldots, M,\\
 & \rm{Tr}(\mathbf{F}^H \mathbf{F}) - P_0 \leq 0,\\
 & \lambda \geq 0,\\
 & \lambda \left( \rm{Tr}(\mathbf{F}^H \mathbf{F}) - P_0 \right) = 0,
 \end{align}
 \end{subequations}
 where $\lambda$ is the Lagrange multiplier associated with the inequality constraint.

The hybrid beamformer at the TX side can be solved with the KKT conditions \cite{Shi2011An}, yielding (\ref{conv:f_m}).
The iterative subgradient method can be applied to update the Lagrange multiplier $\lambda_m$
  \begin{equation}
\lambda_m^{t+1} = \lambda_m^{t} - \xi\left( \rm{Tr}(\mathbf{F}^H \mathbf{F}) - P_0 \right) ,
  \end{equation}
 where $\xi$ indicates the step size. Given the equivalence of the three optimization problems, $\mathcal{P}1$, $\mathcal{P}2$, and $\mathcal{P}3$, the optimal TX beamformers and RX combiners obtained in $\mathcal{P}3$ also satisfy $\mathcal{P}1$. Hence proved.
\end{proof}

 Now the optimal TX beamformer and RX combiner, i.e., $\mathbf{F}^*$ and $\mathbf{W}^*$, satisfying the original optimization problem $\mathcal{P}1$ have been solved in a low-complexity manner. Next, ABF/DBF beamformers are designed based on the TX hybrid beamformer and a given codebook.

 \subsection{Second Stage: OMP-based TX Beamformer Decoupling}
 In this stage, the ABF/DBF beamformers at the BS side, ${\bf F}_{\rm RF}$ and ${\bf F}_{\rm BB}$, are decoupled from above hybrid beamformer ${\bf F^*}$, \textcolor[rgb]{0.00,0.00,0.00}{whereby the ABF precoder ${\bf F}_{\rm RF}$ is realized based on a codebook for low complexity. {\color{black}Suppose the predefined codebook $\mathcal{D}$ with $D$ beam patterns, each implemented with one normalization vector.} Inspired by the compressed signal representation, the iterative greedy OMP method is adopted \cite{Alkhateeb2013HybridPF}.}

 Our target is to reconstruct the ABF matrix from the codebook $\mathcal{D}$, to approximate the optimal HBF design. In these settings, each ABF steering vector, which corresponds to one column in the matrix of $\mathbf{F}_{\rm RF}$, is chosen from the codebook matrix (called ``Dictionary''). Each column vector in this Dictionary, $\mathcal{D}(:,j)$, is termed as ``atom''. The core idea is to reconstruct a sparse matrix by iteratively choosing a dictionary atom based on the principle which is mostly correlated with the current residuals. To achieve fast convergence, the residual signals always yield the orthogonal projection of the subspace generated by all already selected atoms.

 The process of OMP can be summarized as follows. During initialization, current residual $\mathbf F_{res}$ is set to be the optimal hybrid beamformer $\mathbf{F}^*$. In each iteration, a new atom $\bm d_i \in \mathcal{D}$ is appended as a beamforming column vector in $\mathbf{F}_{\rm {RF}}$, which is the most correlated with current residual as \begin{equation}
 \bm d_i = {\arg \max \limits_{\mathcal{D}(:,j)}|\langle\mathcal{D}(:,j), \mathbf F_{res}\rangle|},
 \label{conv:d_i}
 \end{equation}
 {\color{black}where $\langle \cdot,\cdot \rangle$ is the inner product operator.}
 The residual is updated as the orthogonal projection of the subspace of all already selected atoms. Note that, the beamforming vectors in codebook $\mathcal{D}$ cannot be chosen repeatedly. By the end of all iterations, we can get separate ABF and DBF as ${\bf F}^*_{\rm RF}$ and ${\bf F}^*_{\rm BB}$, and the recovered ${\bf F}_{recov}$, based on optimal hybrid beamformer ${\bf F}^*$ and the predefined codebook $\mathcal{D}$.

 In Algorithm 1, the developed two-stage HBF scheme in the ideal blockage-free scenario for mmWave HSR systems is elaborated. \textcolor[rgb]{0.00,0.00,0.00}{Based on Algorithm 1, all TX beamformers and RX combiners can be obtained in the ideal blockage-free HSR scenarios. The two-stage hybrid beamformer implementation yields the achievable SINR $\{\gamma_{m}\}_{m=1}^M$, and the system capacity as $C = \sum_{m=1}^{M}R_m$.} Besides, the fast-fading time-varying channel in HSR can be captured and addressed by the MMSE method, while the OMP stage can recover the optimal hybrid beamformer approximately.

 \section{Anti-blockage Beamforming Design}
Considering random blockages in practical scenarios may hinder the dominant radio links between the BS and MRs, we investigate the anti-blockage beamforming design. Based on the proposed HBF algorithm, we present {\color{black}an anti-blockage beamforming strategy} in practical multi-path/LoS blocked scenarios, and study the impact of blockage on sum rate under various transmission power, MR settings and blockage probabilities.

\subsection{Anti-blockage Algorithm}
 \begin{algorithm}
 \caption{Anti-blockage Beamforming Algorithm}\label{alg:anti-blocakge BF algorithm}
 \begin{algorithmic}[1]
\REQUIRE Obtain hybrid beamformers from Algorithm 1: \{$\mathbf{F}^*$, ${\bf F^*_{\rm RF}}$, ${\bf F^*_{\rm BB}}$, ${\bf W^*}$\}; ${\bf  F}_{res} = {\bf F^*}$ \\
\STATE \textit{Detect the link state} $\gamma$ between the BS and MRs\\
\IF {some link is blocked with $\gamma_m < \gamma_{th}$ }
\STATE \textit{Detect the capacity state} $C$ of the system\\
\IF { $C < C_{th}$ }
\STATE $//$ \textit{update the MMSE stage}
\STATE ${\mathbf{H}}_m = \bm 0$\\
\STATE $(\bf  F_{bl}^*,\bf  W_{bl}^*) =$ {\textit{MMSE}}(${\bf  H}$, $\mathcal{D}$)\\
\STATE ${\bf  F}_{res} = {\bf F_{bl}^*}$\\
\ENDIF
\FOR {$n = 1:{N_{rf}}$}
\STATE $//$ \textit{update the OMP stage}
\IF{$\gamma_n < \gamma_{th}$}
\STATE ${\bf F^*_{\rm RF}}(:,n) = \bm 0$\\
\ELSE {}
\STATE Select the beam steering vector $\bm d_i$ by (\ref {conv:d_i})
\STATE Recover RF beamformer ${\bf F^*_{\rm RF}}(:,n) = {\bm d}_i$\\
\STATE Recover BB beamformer ${\bf F^*_{\rm BB}} = {\bf F_{\rm RF}^\dag} {\bf  F}_{res}$\\
\STATE Update the codebook with ${\mathcal{D}}(:,j)= \bm 0$\\
\ENDIF
\ENDFOR\\
\STATE  $ {\bf F}^*_{\rm BB} = \sqrt{P_0} \ {\bf F}^*_{\rm BB} / \|{\bf F}^*_{\rm RF}{\bf F}^*_{\rm BB} \|_2^2$, $\mathbf{F}_{recov} = {\bf F}^*_{\rm RF}{\bf F}^*_{\rm BB} $
\ENDIF \\
Output: $\mathbf{F}_{recov}$, ${\bf F}^*_{\rm RF}$, ${\bf F}^*_{\rm BB}$
\end{algorithmic}
\end{algorithm}

 Due to the short wavelength, mmWave is vulnerable to various blockages, such as foliage, buildings, and viaducts in railway environments. The link blockage occurs when obstacles appear in the radio links between the transceivers, which results in {\color{black}received signal {\color{black}strength} degradation caused by severe attenuation}. When the achieved SINR at the RX side is {\color{black}lower} than the required threshold $\gamma_{th}$, the system is unable to guarantee the required bit error rate and the link is considered to be ``turned off'' \cite{Kutty2016Beamforming}. The link blockage depends on multiple factors, including the surrounding environment, obstacle density, beamwidth, and {\color{black}transmission distance}. For analytical simplicity, it is assumed that the link blockage probability remains stable in a road segment. The link blockage probability $p_b$ on average is deemed as a constant in a certain section along the rail track \cite{Yang2015}.

\textcolor[rgb]{0.00,0.00,0.00}{Theoretically, by detecting both the link and capacity state, the blockage conditions can be identified and further categorized. First of all, the system detects the CSI when the train enters the cell coverage. The CSI is obtained at the RX side on the train, which is further shared to the TX side, i.e. BS, via the feedback mechanism. Based on the detected CSI, the SINR at the RX side can be calculated in following time slots. The link state is periodically detected on the received SINR within each time slot.}

  {\color{black}With both the link and capacity state detections, the blocked scenarios are categorized into three different classes. If all radio links are detected with desired achievable SINR $\gamma_m \geq \gamma_{th}$, it indicates the system works ideally or in the slightly-blocked scenario (\textit{Class I}); otherwise, further detection on the capacity state is required. If the capacity $C$ is detected to be above the threshold $C_{th}$, the mildly-blocked scenario happens (\textit{Class II}); otherwise, deteriorated capacity ($C \leq C_{th}$) indicates the severely-blocked scenario (\textit{Class III}).}

 By incorporating the blockage effects, an anti-blockage beamforming strategy is proposed by intelligently modifying the HBF design in ideal blockage-free scenarios. {\color{black}To fit different levels of blockage conditions, we adopt accordingly strategies as follows}
{\color{black}\begin{itemize}
  \item For Class I under the slightly-blocked scenario, to achieve a balance between the complexity and performance efficiency, it is unnecessary to update the HBF design in Algorithm 1.
  \item For Class II under the mildly-blocked scenario, the blockage may inflict an intermediate link condition, where the deteriorated link quality is lower than the desired threshold $\gamma_{th}$ but the system capacity is still acceptable. 
   In this case, we can turn to the beamforming redesign, and modify the OMP stage by turning off the blocked links.
  \item For Class III under the severely-blocked scenario, significant blockage attenuation hampers the radio transmission between transceivers, resulting in a low transmission capacity. 
      In this case, the CSI of the blocked links should be reset to zero. The whole HBF algorithm with both MMSE and OMP stage should be re-executed based on the updated CSI. Once the achieved link SINR and system capacity both stay below the threshold, we should remeasure the CSI of all radio links.
\end{itemize}}

  {\color{black}As noted, only when the anti-blockage beamforming algorithm fails to prevent the link deterioration lower than the desired link quality,
    it is necessary to remeasure the CSI. In this way, it can significantly reduce the channel detection overheads while guaranteeing the rate performance, by avoiding frequently periodical detection on CSI within each time slot. Hence, the complexity in the anti-blockage algorithm is significantly reduced.}

 To evaluate the anti-blockage performance, we first need to define the criteria for link outage. The outage probability experienced by MR $m$ is defined as the probability that the achievable SINR is below a certain threshold, which can be {\color{black}obtained} by
 \begin{equation}\label{out_link}
   P_{out,m} = \frac{1}{Q} \sum_{\iota = 1}^{Q} {\bm 1}_{\{\gamma_m^\iota \geq \gamma_{th}\}},
 \end{equation}
 where $Q$ denotes the number of total trials, and $\gamma_m^\iota$ indicates the received SINR at user $m$ in the $\iota$-th trial. The indicator function ${\bm 1}_{\{\rm condition\}}$ equals one if the condition is satisfied, and equals zero otherwise.

 Accordingly, the system outage probability can be given by
 \begin{equation}\label{p_out}
  P_{out} = \frac{1}{M}\sum_{m=1}^{M}P_{out,m}.
 \end{equation}
 The average blocked channel capacity can be derived as
 \begin{equation}\label{out_cap}
  C_{block} = \sum_{m=1}^{M} R_m P_{out,m}.
 \end{equation}
  From (\ref{out_cap}), the blocked capacity is the attenuated version of the free space capacity, affected by random blockage events.

 \subsection{Complexity Analysis}
 The computational complexity analysis of the proposed HBF algorithm and anti-blockage beamforming algorithm is crucial and necessary. {\color{black}The proposed algorithms in blockage-free and blocked scenarios in the worst case (i.e., in the severely-blocked scenario), both including two stages for hybrid beamformer design. Thus the computational complexity of the two algorithms is at the same level.} Suppose the analysis is under the condition of given TX/RX antenna elements deployment.

 In the first stage, the complexity per iteration of the MMSE method comes from the calculation of RX MMSE beamformer $\bm w_m^*$, the MMSE coefficient $e_m^*$, and the TX hybrid beamformer $\bm f_m^*$. To note, the linear MMSE method yields the complexity of $\mathcal{O}(M)$ with the output of $\bm w_m^*$, $e_m^*$. Furthermore, to solve the Lagrangian function for optimizing $\bm f_m^*$, the bisection method is applied to update the Lagrange multiplier $\lambda_m^*$ with the complexity of $\mathcal{O}(M^2\log_2(\frac{1}{\epsilon}))$ \cite{Shi2011An}, where {\color{black}the termination criteria} $\epsilon$ determines the accuracy of the Algorithm 1. For the second stage to reconstruct separate ABF/DBF beamformers, the complexity of the OMP method is $\mathcal{O}(M \ln N_{tx})$ \cite{Alkhateeb2013HybridPF}. Therefore, the two-stage algorithm yields the per-iteration complexity of $\mathcal{O}(M^2\log_2(\frac{1}{\epsilon})+M + M \ln N_{tx})$, which is of polynomial-time computational complexity.

\section{Performance Evaluation}

 In this section, we evaluate the performance of the proposed HBF algorithms in blockage-free and blocked scenarios with various simulation configurations. Specifically, in comparison with other typical beamforming algorithms, we measure the system performance mainly on sum data rate and outage probability. Moreover, the {\color{black}impact} of critical system parameters including the transmission power, the velocity of the train, and the number of MRs {\color{black}is} also investigated. We consider a single-cell where a train traverses at a constant speed of $v$. Main simulation configurations are listed in Table I.

 \begin{table}[tbp]\small
 \centering  
 \caption{Simulation Parameters}
 \begin{tabular}{lccc}  
 \hline
        Parameter &\!Symbol &Value \\ \hline  
         Carrier frequency &\emph{f}$_c$ &32 GHz\\
        System bandwidth &\emph{W} &500 MHz \\         
        Noise power density &$N_0$ &--174 dBm/Hz \\        
        Height of BS &\emph{h}$_{\emph{BS}}$ &10 m \\
        Height of MR &\emph{h}$_{\emph{MR}}$ &2.5 m \\
        BS coverage radius  &\emph{R} &600 m \\
        Number of TX antennas&\emph{N}$_{tx}$ & 32  \\
        Number of RX antennas  &\emph{N}$_{rx}$ & 16 \\
        Number of TX RF chains  &\emph{N}$_{rf}$ & 8 \\
        SINR threshold & $\gamma_{th}$    & 10 dB \cite{Kutty2016Beamforming} \\
        Termination criteria & $\epsilon$ & $10^{-3}$ \\
        {\color{black}Number of paths} & $L$ & $5$ \\
        {\color{black}Number of dominant reflective paths\!} &$N_{\rm Ref}$ & $2$ \\
        {\color{black}Number of effective scattered paths\!} &$N_{\rm Sca}$ & $2$ \\
           \hline
 \end{tabular}
 \end{table}

 \subsection{Benchmark Schemes and Evaluation Metrics}

\begin{figure*} \centering
 \begin{minipage}[t]{0.32\textwidth}
 \centering
  \includegraphics[scale=0.4]{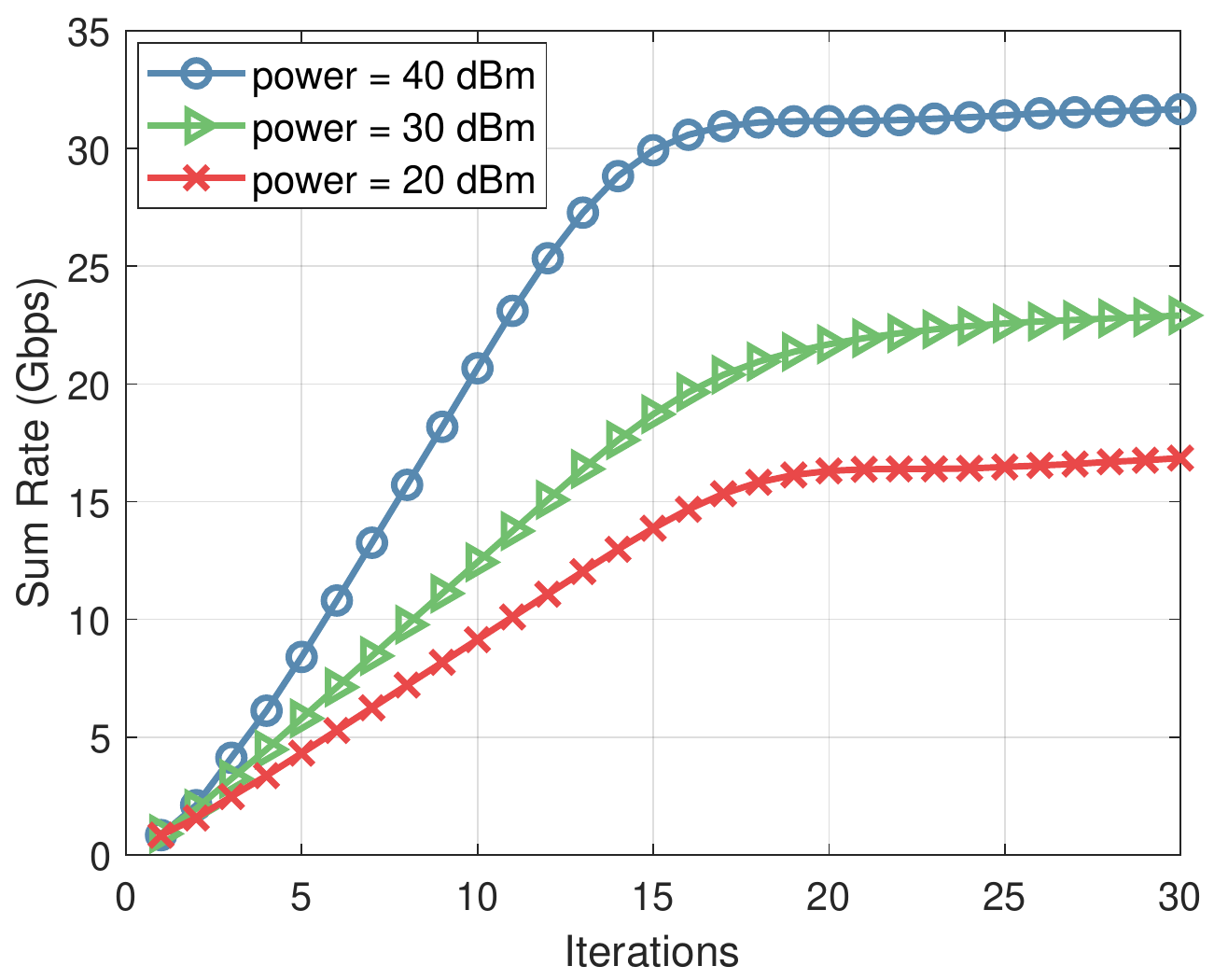}
  \caption{Convergence performance of Algorithm 1.}
  \label{conv:simu_iter}
\end{minipage}
\hspace{0.05in}
 \begin{minipage}[t]{0.31\textwidth}  \centering
  \includegraphics[scale=0.31]{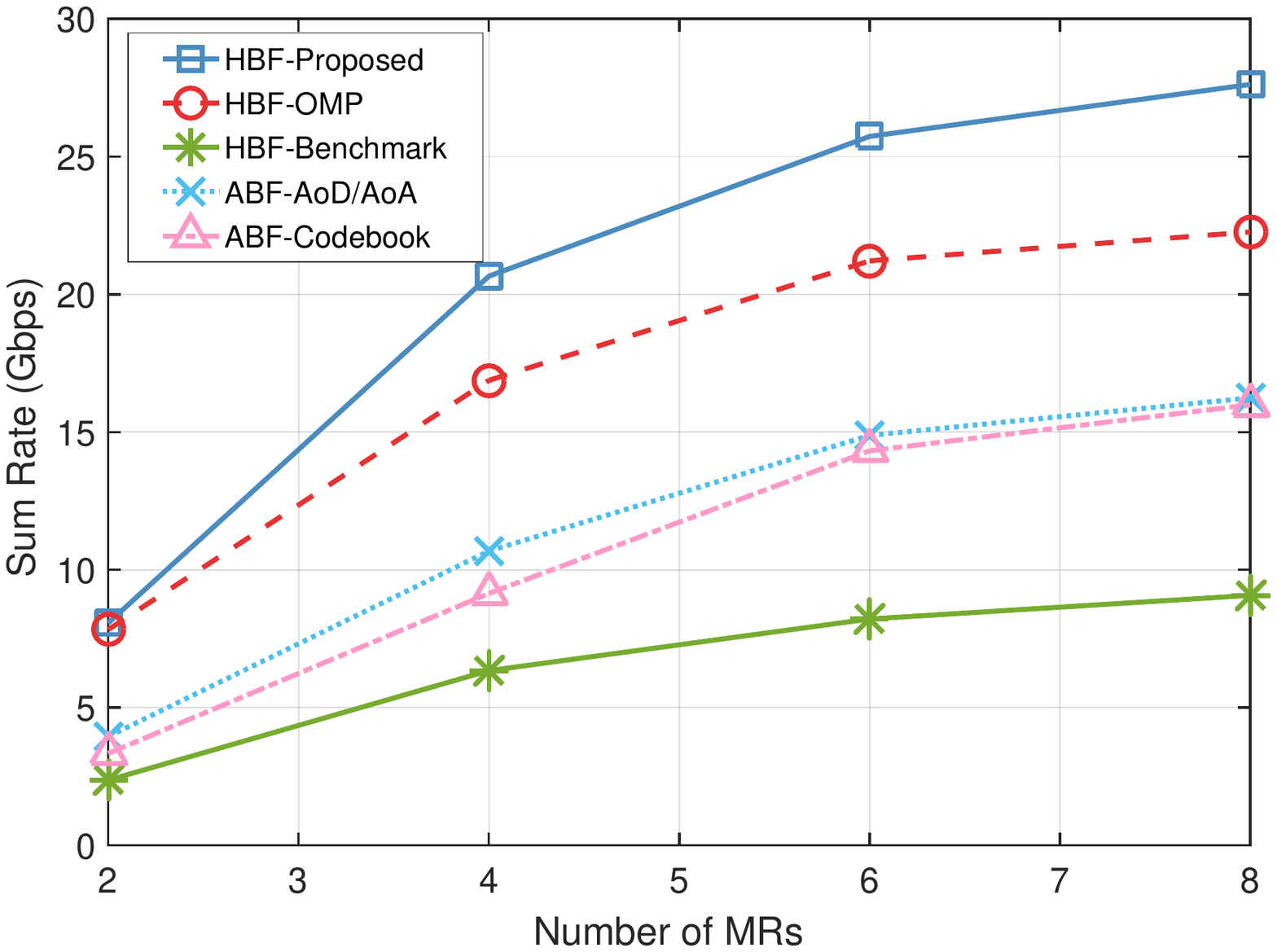}
  \caption{Sum rate vs. the number of MRs.}
  \label{Fig_Rate_MR}
\end{minipage}
\hspace{0.05in}
  \begin{minipage}[t]{0.31\textwidth}\centering
  \includegraphics[scale=0.31]{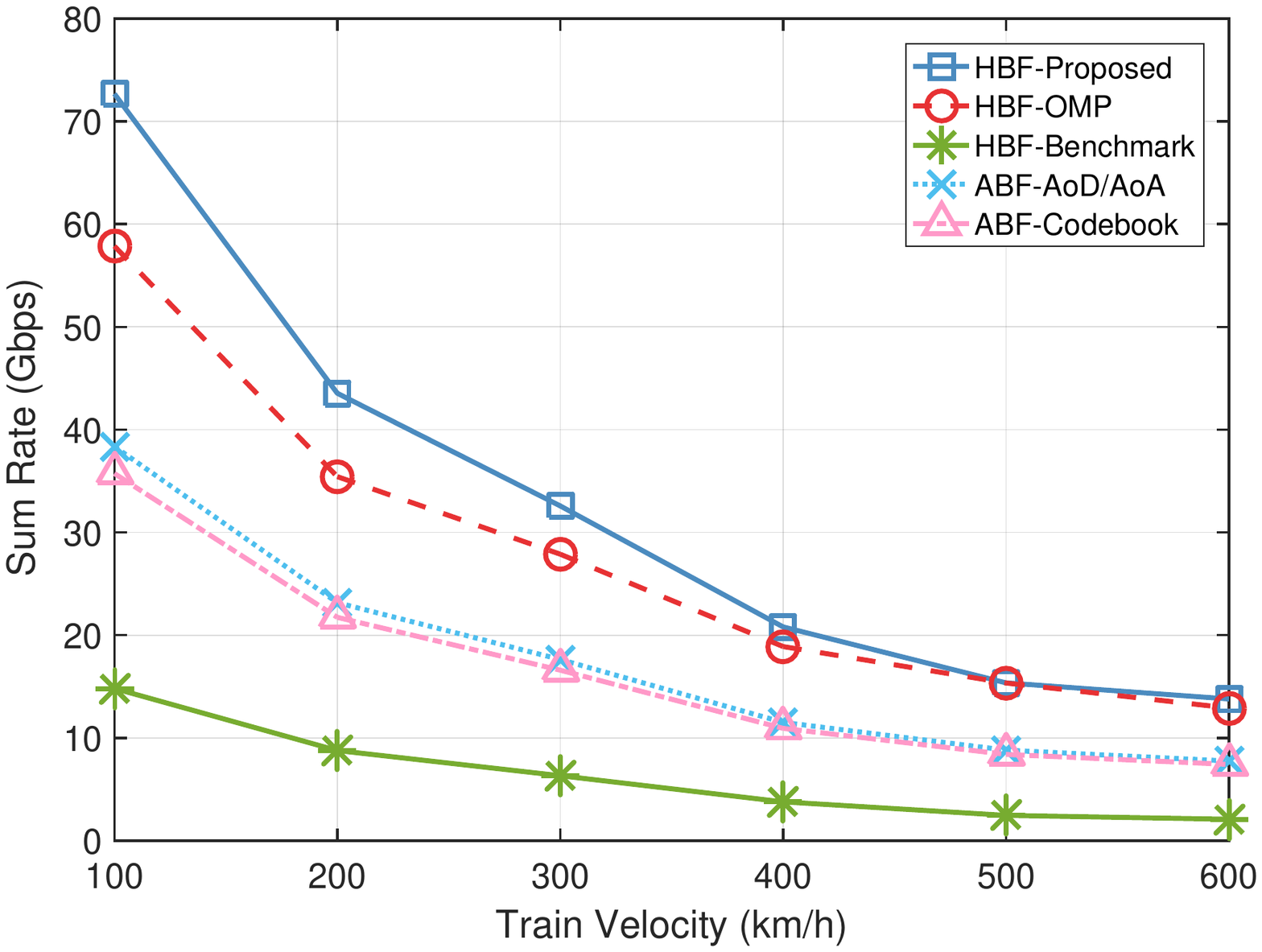}
  \caption{Sum rate vs. the velocity of the train.}
  \label{Fig_Rate_Speed}
  \end{minipage}
\end{figure*}
\textcolor[rgb]{0.00,0.00,0.00}{To evaluate performance }of the proposed HBF (HBF-Proposed) algorithm, another four beamforming schemes are provided for comparison:
 \begin{enumerate}[1)]
    \item \textbf{\textit{HBF-Benchmark}}: In this HBF algorithm, the analog/digital beamformers at the BS side are randomly set with satisfying the predefined codebook constraints and the transmission power budget. The receiver analog combiners are also randomly selected from {\color{black}predefined codebooks}.
    \item \textbf{\textit{HBF-OMP}}: In the proposed two-stage HBF algorithm, the first stage of MMSE hybrid beamformer design makes a tremendous contribution to the complexity reduction while achieving the suboptimal capacity performance. Therefore, the HBF-OMP algorithm is executed to {\color{black}emphasize} the significance of the MMSE stage, with only the second stage OMP of the HBF-Proposed algorithm based on the randomly preset hybrid beamformer.

    \item \textbf{\textit{ABF-AoD/AoA}}: In comparison with above HBF structures, {\color{black}the ABF scheme is performed}. Beam tracking is generally performed by tracking the angle of departure/arrival (AoD/AoA) of the dominant paths, thus adapting the directions of the beams. At the cost of frequent channel estimation and beam alignment overheads, the ABF-AoD/AoA can also achieve a good performance \cite{Gao2018DynamicMB}.
    \item \textbf{\textit{ABF-Codebook}}: Codebook-based beam switching is another typical ABF scheme and is superior to the beam tracking in terms of complexity and overhead, by sweeping finite beamspace over a codebook \cite{IEEE802.15}.
 \end{enumerate}

\begin{figure*}
\centering
\begin{minipage}[t]{0.49\textwidth}
  \centering
  \includegraphics[width=0.75\columnwidth,height=50mm]{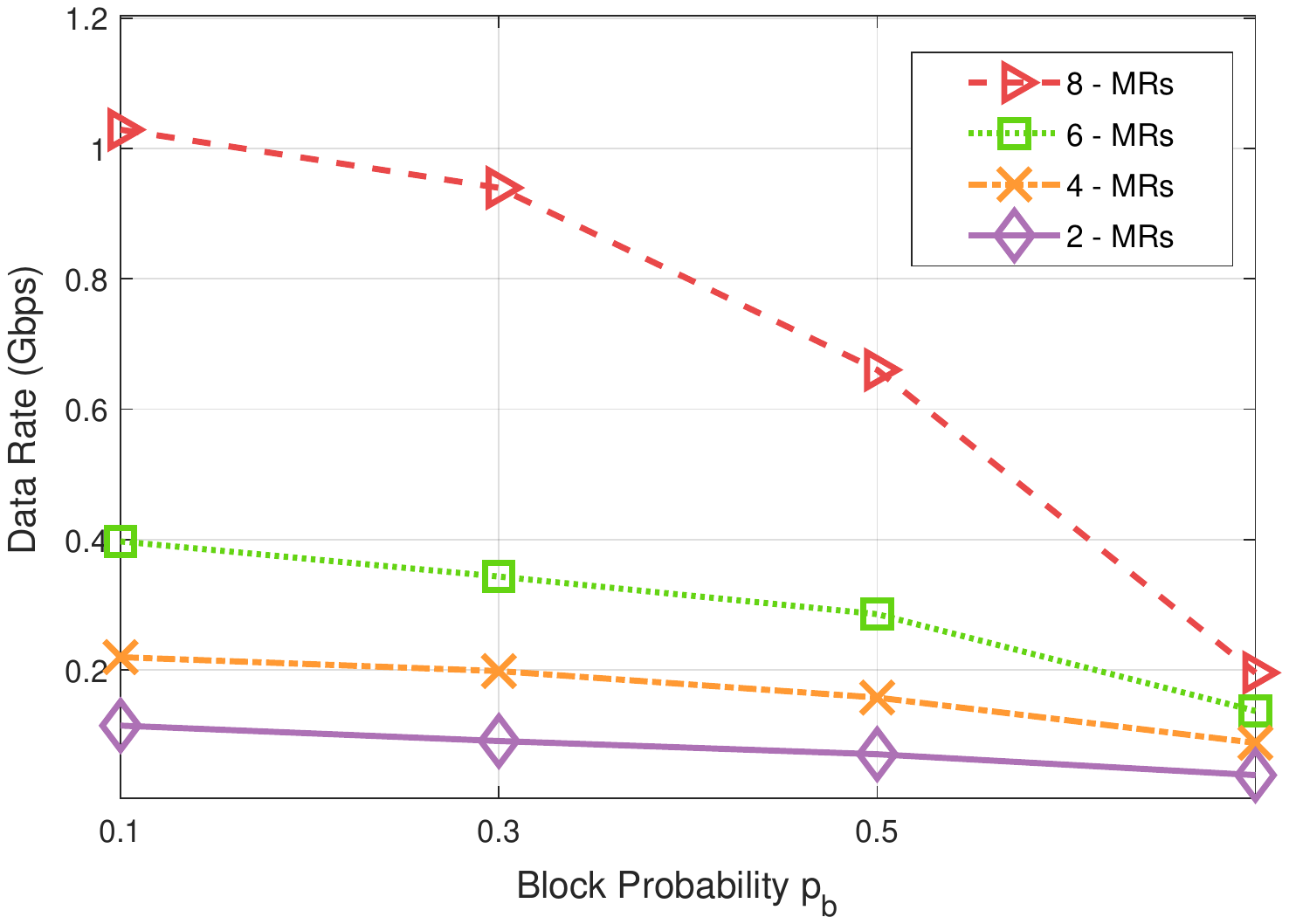}
  \caption{Data rate vs. blockage probability under different numbers of MRs.}
  \label{Fig_Rate_Block}\vspace*{-0mm}
\end{minipage}
\hspace{0.05in}
\begin{minipage}[t]{0.49\textwidth}
  \centering
  \includegraphics[width=0.75\columnwidth]{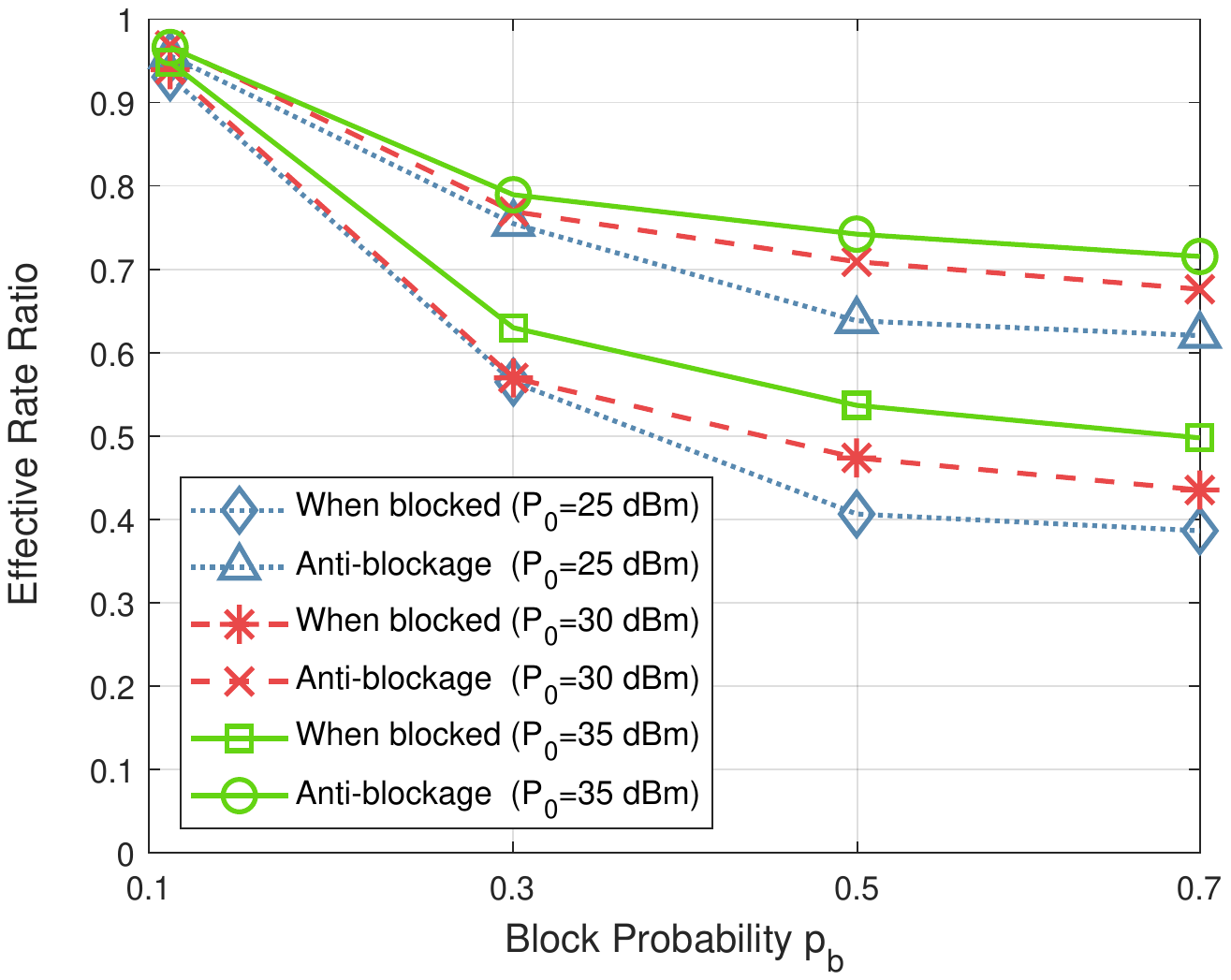}
  \caption{Effective rate ratio vs. blockage probability \textcolor[rgb]{0.00,0.00,0.00}{under different conditions}.}
  \label{Fig_Ratio_Cases_Block}\vspace*{-0mm}
\end{minipage}
\end{figure*}
 Two main performance metrics are evaluated, i.e., data rate and outage probability. Specifically, three kinds of data rate are considered, i.e., the instantaneous \emph{data rate} achieved by all MRs per one time slot, the \textit{sum rate} of the system {\color{black}with system bandwidth} inside the cell coverage, and the \emph{blocked sum rate} with {\color{black}random} blockage calculated by Eq. (\ref{out_cap}). To better demonstrate the impact of the blockage, the \emph{effective rate ratio} of the blocked sum rate to the system sum rate without blockage is defined as $\frac{C_{block}}{C}$, instead of the blocked sum rate. {\color{black}The outage probability in (\ref{p_out}) is measured to verify the outage performance.}

 \subsection{Simulation Results}
  The convergence performance of the proposed algorithms is illustrated in Fig. \ref{conv:simu_iter}, and we have a key observation. The algorithm takes around 20 iterations before convergence, with {\color{black}the termination criterion} $\epsilon = 10^{-3}$ under different settings of transmission power. Hence, the proposed algorithm in both blockage-free and blocked scenarios can rapidly converge.

\textbf{Blockage-free scenarios:} Simulations are conducted to illustrate the data rate performance of the HBF-proposed and other strategies without considering blockage events at first.

 Figure \ref{Fig_Rate_MR} shows the sum data rate performance of the five beamforming schemes under different numbers of MRs with the train velocity $v  = 360$ km/h and the transmission power $P_0 = 30$ dBm. As shown in the figure, we summarize the following four major observations. First, the HBF-Proposed algorithm achieves the highest sum rate, and the HBF-OMP performs slightly worse than the HBF-Proposed scheme due to the absence of the MMSE stage. Second, two ABF algorithms are constrained by the mutual interference of multi-user/streams, and there is apparent degradation of the ABF-AoD/AoA and ABF-Codebook algorithms. Both HBF-Proposed and HBF-OMP schemes outperform the ABF ones with the assistance of the digital precoder, which can facilitate interference mitigation. Third, the HBF-Benchmark algorithm performs the worst due to the randomization. Finally, it is observed that the system sum rate of all beamforming schemes rises with the increase of the number of MRs, by introducing the multiplexing gain and boosting the spectrum efficiency.

  Figure \ref{Fig_Rate_Speed} shows the impact of the train velocity on the system sum rate with 6 MRs deployed on the train. We can observe that the HBF-Proposed algorithm outperforms other schemes, while with the growth of the train speed, the gap between the HBF-Proposed algorithm and the HBF-OMP {\color{black}is narrowing} because of the severe ICI. When the velocity reaches up to 500 km/h, the gap between the two HBF algorithms or the two ABF algorithms becomes nearly indistinguishable, due to the Doppler effect.

\begin{figure*}
\centering
\begin{minipage}[t]{0.49\textwidth}
  \centering
  \includegraphics[width=0.75\columnwidth,height=53mm]{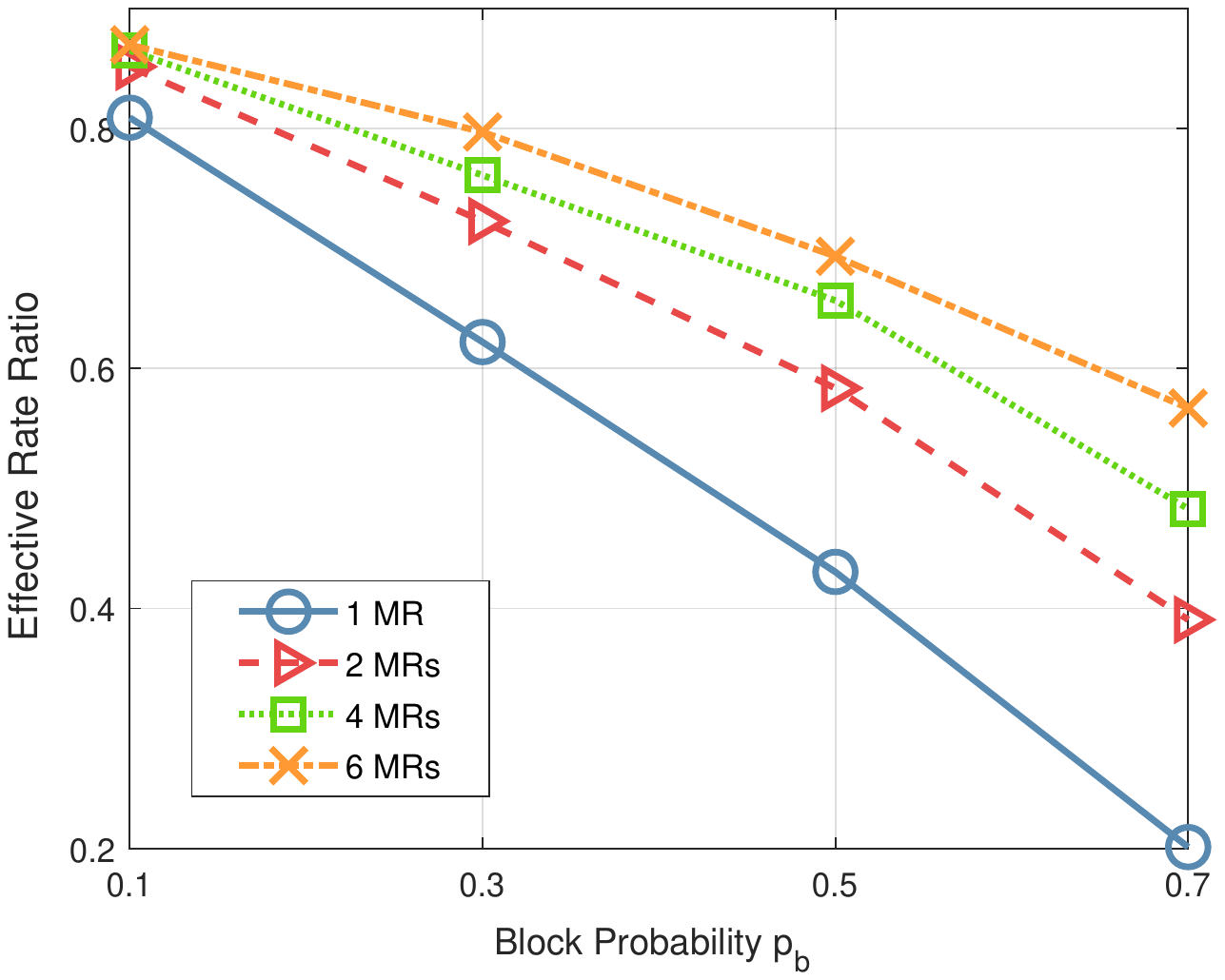}
  \caption{Effective rate ratio vs. blockage probability under different numbers of MRs.}
    \label{Fig_Ratio_MR_Block}
\end{minipage}
\hspace{0.05in}
\begin{minipage}[t]{0.49\textwidth}
  \centering
  \includegraphics[width=0.73\columnwidth]{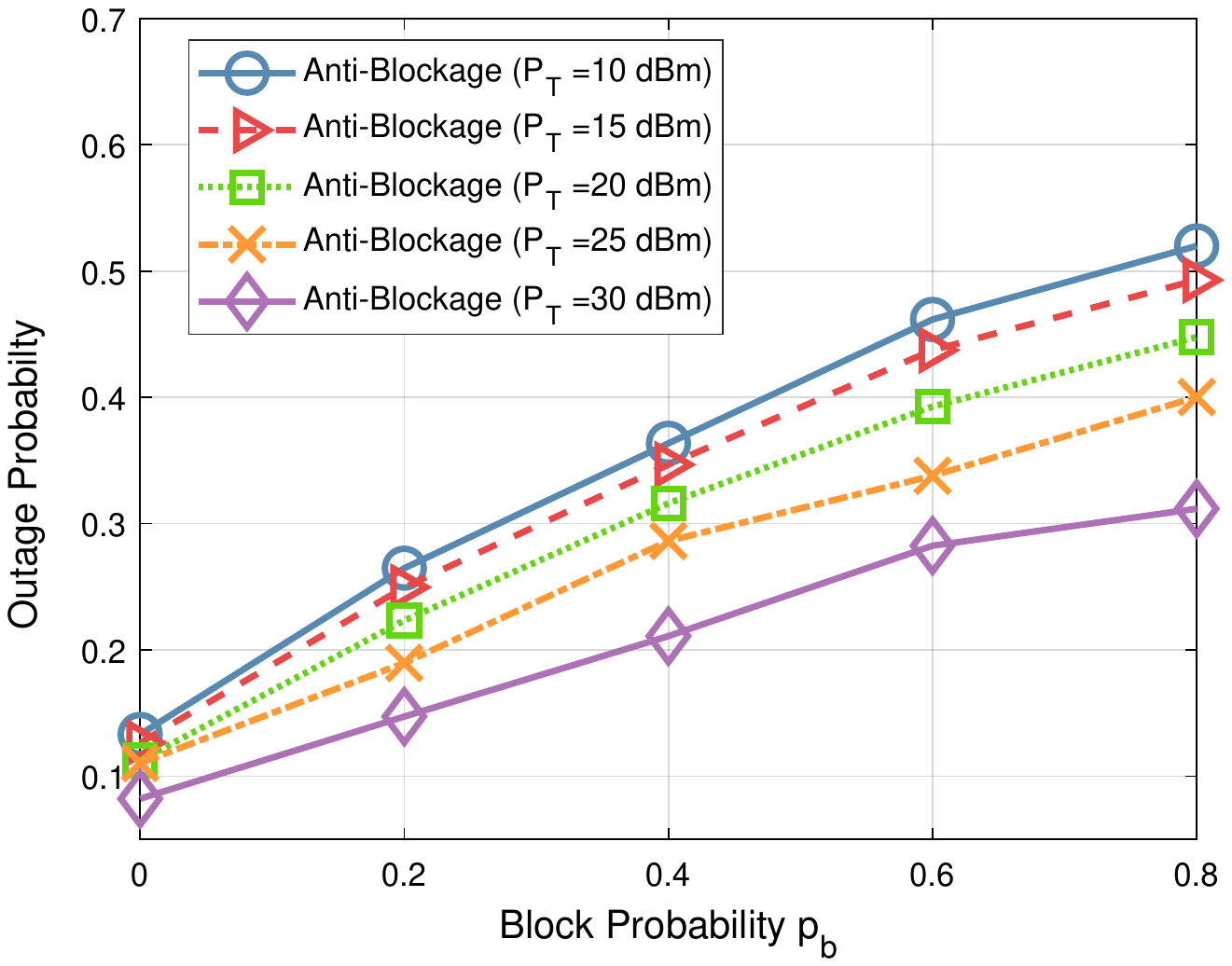}
  \caption{Outage probability vs. blockage probability under different transmission power.}
  \label{Fig_Out_Block}
\end{minipage}
\end{figure*}

\textbf{Blocked scenarios:}
 Next, we investigate the performance of the proposed anti-blockage beamforming algorithm, {\color{black}in terms of} various levels of blockage. When detected with severe attenuation of the observed signals at the MRs, the proposed anti-blockage beamforming algorithm turns to either the OMP stage update or both two-stage update. In the simulations, the blockage probability indicates each link may be blocked at an average possibility of $p_b$ \cite{Yang2015}.

 The instantaneous data rates with varying blockage probabilities and different numbers of MRs are shown in Fig. \ref{Fig_Rate_Block}. On one hand, higher blockage probabilities lead to severe degradation of the data rate with different numbers of MRs. On the other hand, when the blockage probability is low, the impact on the practical data rate is marginal. As blockage probability grows, {\color{black}random obstacles around the radio links }
 between the BS and MRs increase, {\color{black}thus leading} to degraded performance.

 Furthermore, we study the impact of different blockage probabilities under different power budget, as shown in Fig. \ref{Fig_Ratio_Cases_Block}. The train velocity is set to be 360 km/h. From Fig. \ref{Fig_Ratio_Cases_Block}, the proposed anti-blockage beamforming algorithm shows significantly better performance, comparing to the HBF-Proposed algorithm (labeled as ``When blocked'') which suffers from great blockage attenuation. Moreover, sufficient transmission power is observed to be more adaptive to blockage appearance and more robust against the link blockage. Besides, the proposed beamforming algorithm can improve the effective rate ratio {\color{black}by} around 20$\%$ compared with Algorithm 1 in severely-blocked scenarios with $p_b = 0.7$.

  To reveal the impact of concurrent transmission and spatial multiplexing, we set the number of MRs on the train to be 1, 2, 4 and 6, as shown in Fig. \ref{Fig_Ratio_MR_Block}. The transmission power is set to be 30 dBm. First, we can observe that more MRs deployed on the train help relieve the blockage attenuation significantly. Specifically, when the blockage probability reaches up to 0.7, the single MR case achieves only 20$\%$ effective rate ratio, which is only around 1/3 efficiency of the 6-MRs case. Second, the deployment of more MRs provides significant improvement in sum rate, especially in the severely-blocked scenario. Third, it is illustrated that higher blockage probability results in more link interruption thus degrades practical sum rates drastically.

 Figure \ref{Fig_Out_Block} shows the outage probability of the proposed anti-blockage algorithm, calculated by Eq. (\ref {p_out}) under different power budgets with varying blockage probabilities. The increasing blockage probability leads to greater outage possibility, as more dominant paths between transceivers are possibly obstructed. Moreover, the ascending outage probability with lower transmission power also fits well with the observations from Fig. \ref{Fig_Ratio_Cases_Block}.

\section{Conclusion}
 In this paper, we have investigated the efficient HBF design for mmWave HSR communications systems in both blockage-free and blocked scenarios, which is of great importance to provide gigabit data rates in future HSR communications. We first proposed a two-stage HBF algorithm in blockage-free scenarios.
 To combat random blockages, we have further proposed an anti-blockage strategy by intelligently invoking the proposed hybrid beamforming design to maintain system performance when the blockage is detected. The advantages of both algorithms have been verified via extensive simulations.
  For the future work, we will investigate the multi-cell coordinated beamforming in HSR communications.

\ifCLASSOPTIONcompsoc
\else
\fi

{\smaller{}
\bibliographystyle{IEEEtran}


\bibliography{hbf}}

\end{document}